\newcommand{\tm}[1]{\tikz[overlay,remember picture] \node (#1) {};}
\newcommand{\DrawBox}[5][]{%
    \tikz[overlay,remember picture]{%
        \coordinate (TopLeft)     at ($(#2)+(-#4em,0.8em)$);
        \coordinate (BottomRight) at ($(#3)+(#5em,-0.3em)$);
        \path (TopLeft); \pgfgetlastxy{\XCoord}{\IgnoreCoord};
        \path (BottomRight); \pgfgetlastxy{\IgnoreCoord}{\YCoord};
        \coordinate (LabelPoint) at ($(\XCoord,\YCoord)!0.5!(BottomRight)$);
        \draw [red,#1] (TopLeft) rectangle (BottomRight);
    }
}
\def \k {\bm{\mathrm{k}}}
\def \q {\bm{\mathrm{q}}}
\def \rr {\bm{\mathrm{r}}}
\def \lz {M_z}
\def \lxy {M_{xy}}
\def \lxyg {\tilde M_{xy}}
\newcommand{\sT}{\mathcal{T}}
\newcommand{\sP}{\mathcal{P}}
\newcommand{\soc}{\varepsilon_\text{soc}}
\newcommand{\hy}{\varepsilon_h}
\newcommand{\sro}{Sr$_2$RuO$_4$}
\newcommand{\sO}{\mathcal{O}}
\newcommand{\sE}{\mathcal{E}}
\newcommand{\sH}{\mathcal{H}}
\newcommand{\sV}{\mathcal{V}}
\newcommand{\sM}{\mathcal{M}}
\newcommand{\sC}{\mathcal{C}}
\newcommand{\sU}{\mathcal{U}}
\newcommand{\up}{\uparrow}
\newcommand{\down}{\downarrow}
\newcommand{\bra}{\langle}
\newcommand{\ket}{\rangle}
\newcommand{\ve}{\varepsilon}
\DeclareRobustCommand{\S}{S}
\DeclareRobustCommand{\D}{D}
\DeclareRobustCommand{\G}{G}
\def \hs {S}
\def \hd {D}
\def \hg {G}
\DeclareMathOperator{\tr}{tr}
\newtheorem{theorem}{Theorem}[section]
\newtheorem{lemma}[theorem]{Lemma}
\begin{document}

\title{Multiband mean-field theory of the $d+ig$ superconductivity scenario in \sro{}}
\author{Andrew C. Yuan}
\affiliation{Department of Physics, Stanford University, Stanford, CA 93405, USA}
\author{Erez Berg}
\affiliation{Department of Condensed Matter Physics, Weizmann Institute of Science, Rehovot 7610001, Israel}
\author{Steven A. Kivelson}
\affiliation{Department of Physics, Stanford University, Stanford, CA 93405, USA}

\begin{abstract}
  Many seemingly contradictory experimental findings concerning the superconducting state in \sro{} can be accounted for 
  on the basis of a conjectured  accidental degeneracy between two patterns of  pairing  that are unrelated to each other under the  $(D_{4h})$ symmetry of the crystal:  a $d_{x^2-y^2}$-wave $(B_{1g})$ and a $g_{xy(x^2-y^2)}$-wave $(A_{2g})$ superconducting state. 
  In this paper, we propose a generic multi-band model in which the $g$-wave pairing involving the $xz$ and $yz$ orbitals arises from second-nearest-neighbor BCS channel effective interactions.
  Even if time-reversal symmetry is broken in a $d+ig$ state, such a superconductor remains gapless with a Bogoliubov Fermi surface that approximates a (vertical) line node.   
  The model gives rise to a strain-dependent splitting between the critical temperature $T_c$ and the time-reversal symmetry-breaking temperature $T_\text{trsb}$ that is qualitatively similar to some of the experimental observations in \sro{}. 
\end{abstract}
\maketitle
\section{Introduction}

For more than two decades, \sro{} was generally believed to be a chiral $p$-wave superconductor (SC) mainly due to a compelling narrative based on early experiments \cite{mackenzie2003superconductivity,sigrist2005review,maeno2011evaluation,kallin2012chiral,mackenzie2017even}.
The extreme sensitivity of the superconducting state to impurities \cite{mackenzie1998extremely} unambiguously establishes that it is an unconventional SC, and various experiments confirm that the gap is nodal \cite{lupien2001ultrasound, bonalde2000temperature}.
However, recent nuclear magnetic resonance (NMR) experiments have seemingly ruled out triplet pairings of any sort \cite{ishida2020reduction,pustogow2019constraints,chronister2020evidence}.

Further constraints on the symmetry of the SC order can be inferred from a variety of experiments.  
Elasto-caloric data \cite{li2022elastocaloric} obtained when the Fermi surface is tuned through a Lifshitz transition strongly suggests that  the SC gap is non-vanishing at the Van Hove points.
Moreover, ultrasound experiments, taken at face value, suggest a two-component order parameter with highly constrained symmetries \cite{ghosh2021thermodynamic, benhabib2021ultrasound}:
A discontinuity in the $c_{66}$ shear modulus implies that the only possibilities consist of the innately two-dimensional irrep - $d_{xz}$ - $d_{yz}$ ($E_g$) \cite{vzutic2005phase, ramires,ramires2022nodal,suh2020stabilizing,ramires2019superconducting, kaser2022interorbital,romer2022leading} 
-  or an accidental degeneracy between two distinct one-dimensional irreps whose tensor product has $B_{2g}$ symmetry, i.e., $d_{x^2-y^2}$ \& $g$ \cite{kivelson2020proposal,yuan2021strain,sheng2022multipole,gingras2022superconductivity,clepkens2021shadowed} or $s$ \& $d_{xy}$ \cite{romer2020theory, romer2021superconducting}.
Evidence \cite{grinenko2021split,xia2006high} of  time-reversal symmetry (TRS) breaking at or slightly below $T_c$ provides further evidence of a two component order parameter.

That the Fermi surface of \sro{} is quasi-two-dimensional, i.e. it consists of three cylindrical  sheets  corresponding to the $\alpha$, $\beta$, and $\gamma$ bands, makes it seem unlikely that the SC order parameter has strong inter-layer pairing in the vertical $z$ direction, which provides an additional strong reason to exclude the possibility of $d_{xz}$ - $d_{yz}$ ($E_g)$ pairing.  
This is further supported by the lack of a visible discontinuity in the $(c_{11}-c_{12})/2$ ($B_{1g}$) shear modulus \cite{ghosh2021thermodynamic}.
Conversely, an accidental degeneracy between distinct irreps requires a certain degree of fine-tuning so that the critical temperature $T_c$ is roughly equal to the TRS breaking temperature $T_\text{trsb}$, i.e., $T_c\approx T_\text{trsb}$.
However, this requires only one-degree of fine tuning and thus could plausibly arise in a small subset of SC materials.

In our recent works \cite{kivelson2020proposal,yuan2021strain}, we preferred the $d_{x^2-y^2}+ig$ pairing symmetry over the $s+id_{xy}$ wave, mainly due to experimental observations of line nodes in the SC gap function \cite{hassinger2017vertical,lupien2001ultrasound, bonalde2000temperature}. 
Indeed, in a single-band model, $d+ig$ pairing leaves symmetry protected line nodes along the diagonal directions (110), (1$\bar{1}$0), while
$s+id_{xy}$ pairing would require an extra degree of fine tuning, i.e., a $s+id_{xy}$-wave SC is generically fully gapped.
However, a perturbative study of the effective interaction based on a ``realistic'' multi-band model of the electronic structure of \sro{} concluded that the leading instabilities  are in the $s$ and $d_{xy}$-channels \cite{romer2020theory, romer2021superconducting}; 
this result raises issues concerning the ``naturalness'' of the $d_{x^2-y^2}+ig$ SC state.
In particular, within a single band on a square lattice, the pair-wave-function in a $g$-wave state vanishes at all distances shorter than 4th nearest-neighbors, and thus seemingly requires unnaturally long-range interactions. 

To address this problem, we consider a generic multi-band microscopic BCS model in which the results follow largely from symmetry considerations.
In this model, the $g$-wave always lives primarily on the $\alpha$ and $\beta$ bands, which derive from the symmetry-related $xz$ and $yz$ Ru orbitals.
There are two reasons for this:
(1) $g$-wave pairing is strongly disfavored in the $xy$-band 
in that the pair-wave-function vanishes where the density of states is largest in the vicinity of the Van Hove points.
(2) 
Coupling between the $xz$ and $yz$ bands permits $g$-wave pairing to be induced by 2$^\text{nd}$ nearest-neighbor effective interactions \footnote{
Note:  The interactions here should be viewed as effective interactions.  However, so long as any  fluctuations that mediate such interactions are far from critical, the resulting effective interactions will be short-ranged in space and time - as assumed here.  For instance, the correlation length associated with short-range spin-density wave correlations ($\Delta q=0.13 ~\textup{\AA}^{-1}$)  seen in neutron scattering \cite{sidis1999evidence} has a correlation length that is $1/(a \Delta q)\approx 2$ lattice constants, where $a=3.85 ~\textup{\AA}$ is the in-plane lattice spacing.}, i.e. much shorter-range than is required in a single-band context.
Conversely, the $d_{x^2-y^2}$ component is strongest on the $xy$ band.

As a consequence, the model readily accounts for the recently observed   uniaxial stress dependence of the splitting of the critical temperature $T_c$ from the time-reversal symmetry (TRS) breaking $T_\text{trsb}$ 
\cite{grinenko2021split}.
As we show via microscopic BdG calculations (Fig. \ref{fig:TK-uniaxial}), application of uniaxial stress has little effect on $T_c$ until it is sufficiently strong to trigger a sharply peaked enhancement close to the critical strain at which the $xy$-band crosses a van Hove point, leading to a divergent  density of states (DOS) \cite{sunko2019direct,hsu2016manipulating,li2022elastocaloric}.
In contrast, across this Lifshitz transition, the $xz$ and $yz$-bands are only slightly distorted (an asymmetry of $\sim2\%$) \cite{grinenko2021split} and thus the $g$-wave component of the order parameter is 
largely unaffected, resulting in a $T_\text{trsb}$ that is only weakly strain dependent. 
(The strain dependence would be quite different for a putative $s+id_{xy}$-wave state, in which both components  sit on all bands.) 
Within our model, the approximate degeneracy between the $d$ and $g$ wave components persists in the presence of symmetry breaking shear strain. However, it is lifted by isotropic strain and leads to a linear increase in $T_c$; this effect has not been detected in recent experiments \cite{jerzembeck2022superconductivity}. How to reconcile this observation with our $d+ig$ proposal remains unresolved at present.

Regarding the nodal structure, we show that the symmetry-protected line nodes along the diagonal planes $(110),(1\bar{1}0)$ anticipated in the single-band model extend into Bogoliubov Fermi Surfaces (FS) in the general multi-band $d_{x^2-y^2}+ig$ state.
The Bogoliubov FSs are extended in the $k_z$ direction and
consist of narrow ellipses for generic $k_z$, that pinch into point nodes in the $k_z=0$ and $\pi$ planes.
The pinching off at $k_z=0\: \& \: \pi$ is due to a combination of the mirror symmetries $\sM_z: z\mapsto -z$ and $\sM_{xy}: x\leftrightarrow y$ \footnote{In general, mirror symmetry $\sM_z$ acts on both on $\k$-space and orbital/spin space $\lz$ so that $\sM_z = \lz \otimes (k_z\mapsto -k_z)$.
In the $k_z=0,\pi$-planes, $\sM_z$ conserves $\k$ so that $\sM_z =\lz$, and thus when the state is even under $\sM_z$, the $\k$-conserving Hamiltonian is decoupled in the eigenspaces $\lz=\pm i$. Away from the $k_z=0,\pi$ planes, such a decomposition is not exact and thus the radius of the Bogoliubov Fermi surface is in proportion to the weak coupling between the orbital/spin eigenspaces $\lz=\pm i$. This is not necessarily on the same magnitude as the inter-plane coupling. Indeed, one could imagine a general 3D Hamiltonian with $k_z$ dependence (due to inter-plane coupling), but still decoupled in the eigenspaces $\lz=\pm i$ involving only orbital/spin (see Appendix \eqref{sec:stability-details}. This particular scenario would not be enough to induce Bogoliubov Fermi surfaces.}.
In practice, such Bogoliubov Fermi surfaces would exhibit behavior characteristic of line nodes (as is observed experimentally \cite{hassinger2017vertical,lupien2001ultrasound, bonalde2000temperature}), except at extremely low temperatures.

\section{Microscopic  Analysis}
\begin{figure}[ht]
\subfloat[\label{fig:FS-a}]{%
  \centering
  \includegraphics[width=.48\columnwidth]{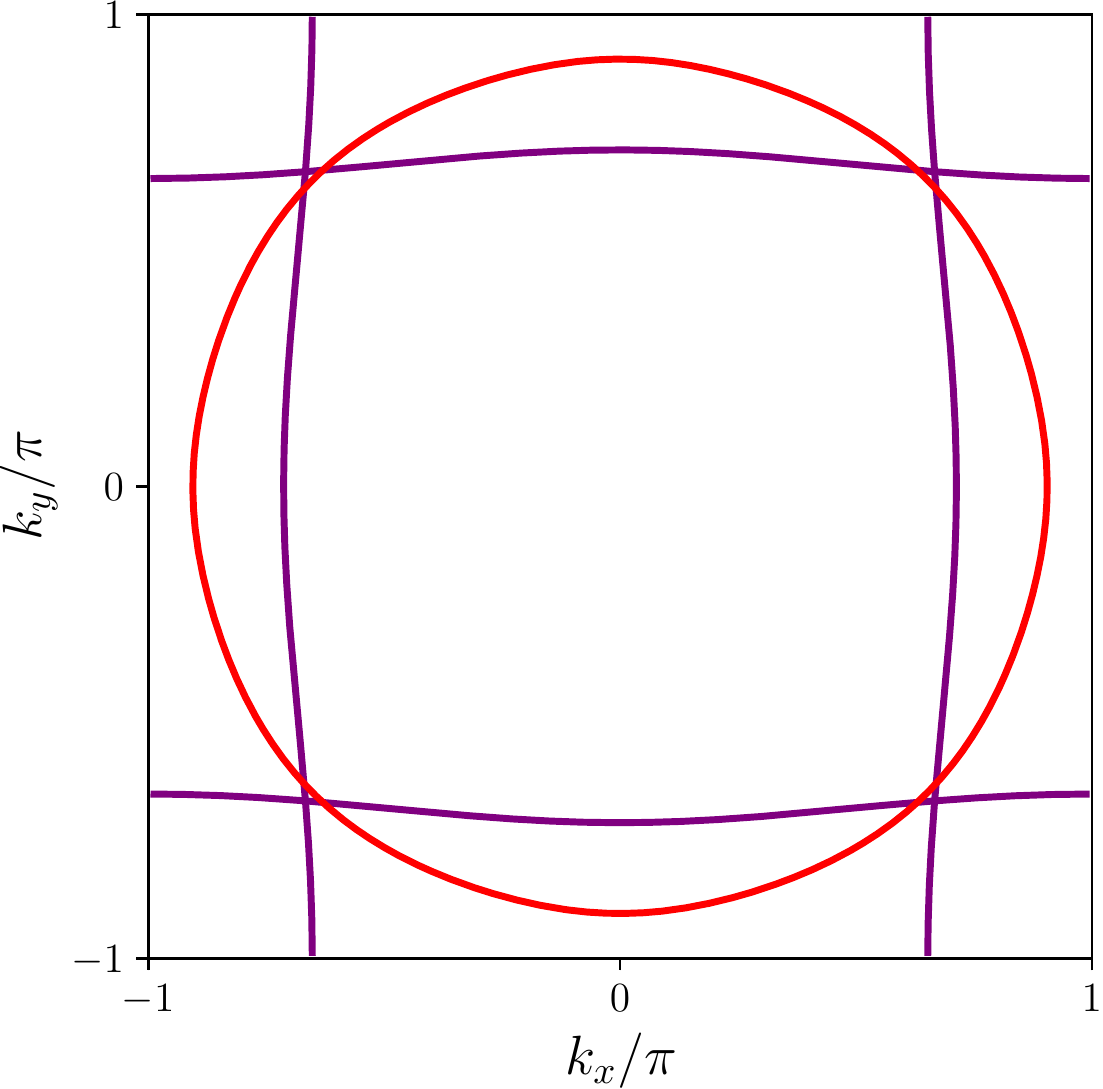}
}
\subfloat[\label{fig:FS-b}]{%
  \centering
  \includegraphics[width=.48\columnwidth]{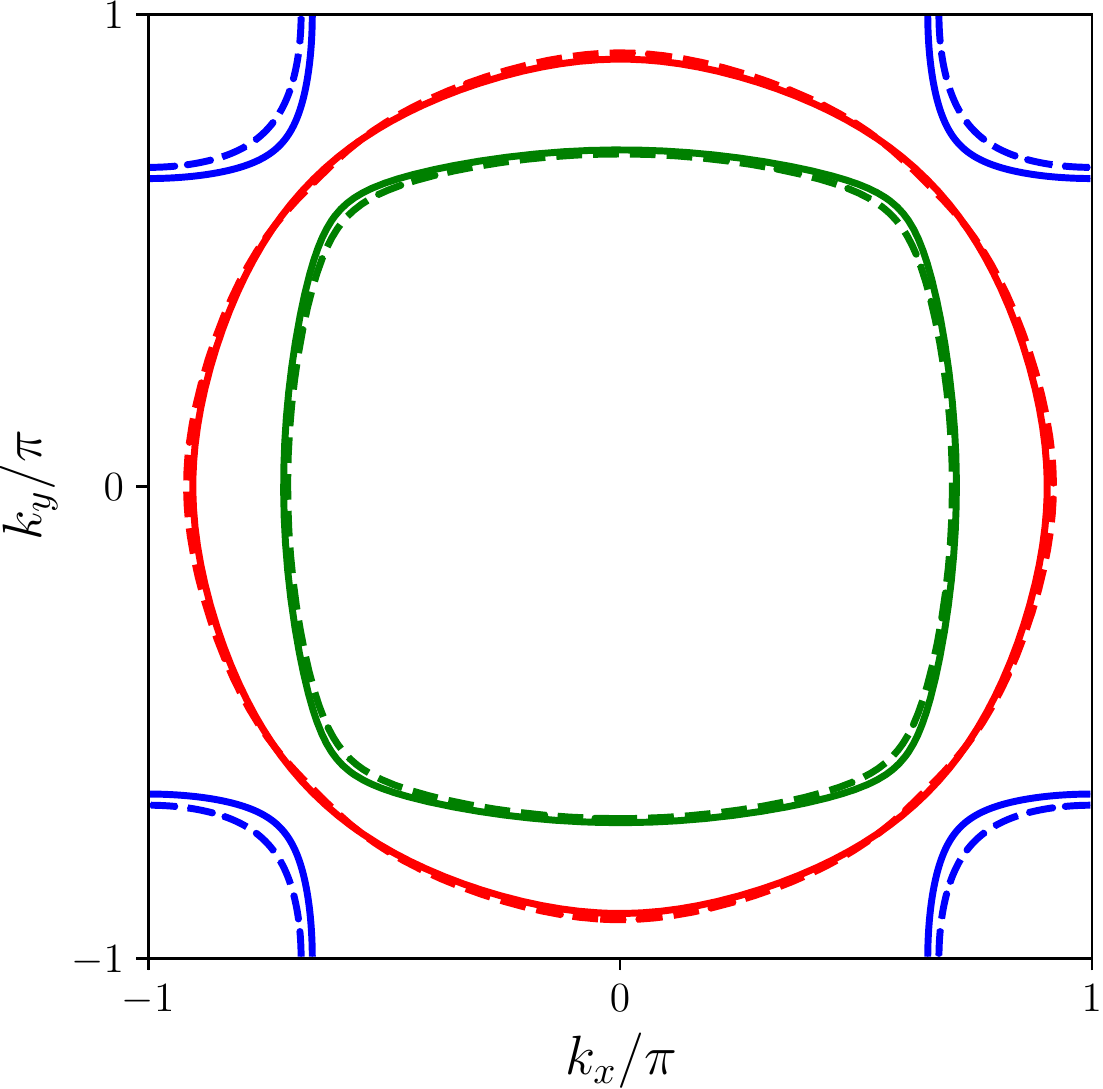}
}
\caption{(a) Fermi surfaces (FS) computed in the absence of hybridization and spin-orbit (SO) coupling. (b) FS including the effect of hybridization between the $xz,yz$ bands but no SO (solid lines), and with both hybridization and SO (dashed lines). Here interplane dispersion is neglected; numerical parameters are given in Appendix \eqref{sec:band-structure}. 
}
\label{fig:FS}
\end{figure}
\subsection{Setup}

Unless otherwise specified, we will consider a 2D model - corresponding to a single RuO$_2$ plane (We  we will consider the effects of 3D dispersion in the discussion of the nodal structure).  
We will always restrict our attention (for reasons already discussed) to gap functions that are even with respect to both parity, $\sP$, and under mirror symmetry $\sM_z$.
This leaves us with only one-dimensional irreps of the point group $D_{4h}$, i.e., $s,d,d',g$-wave pairings ($A_{1g},B_{1g},B_{2g},A_{2g}$, respectively).

\subsection{Normal State}

Let us first consider the normal state $\Delta =0$ where the Hamiltonian is
\begin{align}
    \label{eq:normal}
    \sH_0 &= \sH_\text{intra}  +\sH_\text{hybrid} +\sH_\text{SO} \\
    \sH_\text{intra} &= \sum_{\k\nu s} \ve_{\nu}(\k) \psi_{\nu s}^\dagger(\k) \psi_{\nu s}(\k) \nonumber\\
    \sH_\text{hybrid} &= \sum_{\k s} \hy(\k) \left(\psi_{xs}^\dagger(\k) \psi_{ys}(\k) + \text{h.c} \right) \nonumber\\
    \sH_\text{SOC} &= \sum_{\k\nu\nu'ss'} \soc(\k) \psi_{\nu s}^\dagger (\k)\psi_{\nu's'}(\k) \bm{l}_{\nu\nu'}\cdot \bm{\sigma}_{ss'} \nonumber
\end{align}
The first term $\sH_\text{intra}$ represents the dispersion of each orbital $\nu= x,y,z$ (for $d_{yz}, d_{zx}, d_{xy}$ respectively), while the second term $\sH_\text{hybrid}$ is the spin-conserving hybridization between the $x,y$ bands and generates the $\alpha,\beta$ bands shown in Fig. \ref{fig:FS}, where $\hy(\k)$ characterizes the strength of hybridization between the $x,y$ bands 
\footnote{It should be noted that due to $\sM_z$ symmetry, there exists no spin-conserving hybridization between the $x,z$-bands and $y,z$-bands, i.e., $\ve_{x,z}=\ve_{y,z}=0$. However, in a full 3D model, such hybridization terms are allowed and can give rise to gap functions between the $x,z$-bands and $y,z$-bands, i.e., $\Delta_{x,z},\Delta_{y,z}\ne0$, even in the absence of spin-orbit coupling. Since \sro{} is quasi-2D, such terms can always be treated perturbatively.}.
The third term represents the spin-orbit coupling (SOC) where $\bm{l}^\mu_{\nu\nu'} = -i\epsilon_{\nu\nu'\mu}$.
The specific parametrizations of $\ve_\nu(\k)$, $\hy(\k)$, and $\soc(\k)$ used in our calculations are given in the Appendix \eqref{sec:band-structure}. 

In the normal state, $\sP,\sM_z,\sT$ are all preserved. 
The single particle space can thus be decomposed into the eigenspaces of $\sM_z =\pm i$ \footnote{That $\sM_z^2 = -1$ is due to its action on the spin sector.}  with ordered basis $|+\ket\equiv (x{\up},y{\up}, z{\down})$ and $|-\ket \equiv (x{\down}, y{\down}, -z{\up})$, respectively \footnote{Technically, $|\pm \ket$ are ordered bases for the eigenspaces of $\lz=\pm i$ where $\lz$ is the orbital/spin action of $\sM_z$ so that $\sM_z =\lz\otimes (k_z\mapsto -k_z)$. However, since we are dealing with an ideal 2D model, the distinction will not be important until we consider a general 3D model (see Appendix \eqref{sec:stability-details}).}. 
The signs are chosen so that the 
$\k$-conserving anti-unitary symmetry $\sT \sP$ maps $|+\ket \mapsto |-\ket$, and thus the two eigenspaces have equivalent energy levels 
\footnote{It should be mentioned that even in the absence of mirror symmetry $\sM_z$, the anti-unitary symmetry $\sT\sP$ is sufficient to show that energy levels are doubly degenerate (and thus there are only 3 Fermi surfaces).
The potential benefit of this extra condition is to obtain a canonical decoupling which will persist into the SC state when TRS is broken.}.
The normal state Hamiltonian is simplified as
\begin{align}
    \label{eq:normal-simplified}
    \sH_0 &= \sum_{\k,a=\pm} \psi_a^\dagger (\k) \hat{\ve}_a (\k)\psi_a(\k) \\
    \hat{\ve}_{\pm} &= \bra \pm |\ve |\pm' \ket = 
    \begin{bmatrix}
    \ve_{x}      & \hy\mp i\soc & \soc     \\
    \hy\pm i\soc  & \ve_{y}     & \mp i\soc \\
    \soc          & \pm i\soc    & \ve_{z}
    \end{bmatrix} \nonumber
\end{align}
Where $\psi_{\pm}^\dagger$ are the creation operators for the basis $|\pm \ket$ and $\hat{\ve}_- = \hat{\ve}_+^*$ due to $\sT\sP$ \footnote{The apostrophe in $|\pm '\ket$ is to denote a possible different vector (within the same $\pm$ sector) than $|\pm \ket$, e.g., $\bra x\up| \ve |z\down\ket$.}. 
Using a multi-orbital symmetry analysis (Table II of \cite{ramires2019superconducting} or Appendix \eqref{sec:symmetry-operations}), the components $\ve_x(\k) \pm \ve_y(\k),\hy(\k)$ must satisfy $S,D,D'$-wave ($A_{1g},B_{1g},B_{2g}$) symmetry in $\k$-space, respectively \footnote{$S,D,D'$ are capitalized to emphasize that the symmetries are only with respect to $\k$-space and not orbitals/spin. It should be noted that the symmetries of each component were obtained assuming that the normal state $\ve(\k)$ is invariant under the point group $D_{4h}$. If, for example, $100$- or $110$- strain was included, the point group would shrink (though the normal state would still preserves mirror symmetry $\sM_z$).}.

\subsection{BCS State}
Since the BCS Hamiltonian is assumed to preserve $\sM_z$, a similar decomposition can be performed on the general multi-band particle-hole space.
More specifically, particle states $|+\ket$ and hole states $|-\ket$ form an ordered basis for the eigenspace $\sM_z=+i$, and particle states $|-\ket$ and hole states $|+\ket$ for $\sM_z = -i$. 
Let $\sC$ denote particle-hole symmetry so that $\sC\sP$ denotes a 
$\k$-conserving anti-unitary symmetry  (maps $\sH(\k) \mapsto -\sH(\k)$) between the $\sM_z = +i \leftrightarrow  -i$ eigenspaces.
Hence, the BCS Hamiltonian $\sH$ in the decoupled eigenspaces $\sM_z = \pm i$ have equivalent (opposite signed) energy levels.
When TRS $\sT$ is preserved, each eigenspace is further doubly (opposite signed) degenerate due to the local unitary symmetry $\sT\sC$ which commutes with $\sM_z$ \footnote{Even in the absence of mirror symmetry $\sM_z$, the symmetries $\sT\sP,\sC\sP$ are sufficient to show that every nonzero eigen-energy is doubly degenerate, while the zero eigen-energy space is quartic-degenerate. However, the mirror symmetry $\sM_z$ will make it possible to analyze nodal points of TRS breaking accidentally degenerate states, e.g., $d+ig$ pairings.}.
Therefore, we can use the Nambu spinors $\Psi_\pm^\dagger(\k) = (\psi_\pm ^\dagger(\k), \psi_\mp(-\k))$ to write the BCS Hamiltonian in the following block-matrix form
\begin{align}
    \label{eq:BCS-block}
    \sH &= \sum_{\k,a=\pm}  \Psi_a^\dagger(\k) H_{a}(\k) \Psi_a(\k)\\
    H_{\pm}(\k) &=
    \begin{bmatrix}
    \hat\ve_\pm(\k) & \hat\Delta_\pm(\k)\\
    \hat\Delta_\pm^\dagger(\k) & -\hat\ve_\pm(\k)
    \end{bmatrix} \nonumber\\
    \hat{\Delta}_+(\k) &= 
    \begin{bmatrix*}[c]
      \Delta_{x}    & \Delta_{x,y}      & \Delta_{x,z} \\
      \Delta_{y,x}  & \Delta_{y}        & \Delta_{y,z} \\
      \Delta_{z,x}  & \Delta_{z,y}      & \Delta_{z}
    \end{bmatrix*}, \quad \hat{\Delta}_- = -\hat{\Delta}_+^T
    \nonumber
\end{align}
where $\hat{\ve}_\pm (\k)$ is given in Eq. \eqref{eq:normal-simplified}. 
In the case where TRS $\sT$ is preserved, $\hat{\Delta} (\k)^\dagger =\hat{\Delta} (\k)$ \footnote{Due to the symmetry between $\hat{\Delta}_\pm$, we will drop the $+$ subscript when referring to the gap matrix $\hat{\Delta}_+$.}.
\subsection{Symmetries of Gap Matrix}
\begin{table}[h!]
\centering
\begin{tabular}{||c|c|c||}
 \hline
 Component $C$ \textbackslash  Operation $\sO$    & $\sC_4$ & $\sM_{xy}$ \\ [0.5ex]
 \hline\hline
 $\Delta_z$ & $+$ & $+$ \\\hline
 $\Delta_x+\Delta_y$ & $+$ & $+$ \\\hline
 $\Delta_x-\Delta_y$ & $-$ & $-$ \\\hline
 $\Delta_{x,y}+\Delta_{y,x}$ & $-$ & $+$ \\\hline\hline 
 $\Delta_{x,y}-\Delta_{y,x}$ & $+$ & $+$ \\\hline 
 $\Delta_{x,z}+\Delta_{z,x}+i(\Delta_{y,z}-\Delta_{z,y})$ & $+$ & $+$\\\hline
 $\Delta_{x,z}+\Delta_{z,x}-i(\Delta_{y,z}-\Delta_{z,y})$ & $-$ & $-$\\\hline
 $\Delta_{y,z}+\Delta_{z,y}+i(\Delta_{x,z}-\Delta_{z,x})$ & $-$ & $+$\\\hline
 $\Delta_{y,z}+\Delta_{z,y}-i(\Delta_{x,z}-\Delta_{z,x})$ & $+$ & $-$\\\hline
\end{tabular}
\caption{\label{tab:symmetry-mapping}  The $\eta = \pm$ signs in the table are such that $C(\k) \mapsto \eta C(\k')$ under point group operation $\sO=\sC_4,\sM_{xy}$  where $\k' \equiv \sO^{-1} \k$. The first 4 components (last 5) correspond to singlet (triplet) pairing.}
\end{table}

Using a multi-orbital symmetry analysis (\cite{ramires2019superconducting} or Appendix \eqref{sec:symmetry-operations}), it is straightforward (but a bit complicated) to see that the matrix elements of $\hat{\Delta}(\k)$ are mapped under $\pi/2$-rotation $\sC_4$ and $x\leftrightarrow y$ mirror symmetry $\sM_{xy}$ (the remaining generators of $D_{4h}$), in the manner tabulated in Table. \ref{tab:symmetry-mapping}.
Depending on the overall symmetry of the SC gap, the remaining matrix components must satisfy corresponding symmetries in $\k$-space shown in Table \ref{tab:symmetry}.

\begin{table}[h!]
\centering
\begin{tabular}{||c | r| r| r| r||}
 \hline
  Component $C\backslash$  Symmetry     & $s$     & $d$     & $d'$   & $g$ \\ [0.5ex]
 \hline\hline
 $\Delta_z$                  & $\S$     & $\D$    & $\D'$  & $\G$  \\\hline
 $\Delta_x+\Delta_y$         & $\S$     & $\D$    & $\D'$  & $\G$ \\\hline
 $\Delta_x-\Delta_y$         & $\D$     & $\S$    & $\G$   & $\D'$ \\\hline
 $\Delta_{x,y}+\Delta_{y,x}$ & $\D'$    & $\G$    & $\S$   & $\D$ \\
 \hline\hline 
 $\Delta_{x,y}-\Delta_{y,x}$                              & $\S$     & $\D$    & $\D'$  & $\G$ \\\hline 
 $\Delta_{x,z}+\Delta_{z,x}+i(\Delta_{y,z}-\Delta_{z,y})$ & $\S$     & $\D$    & $\D'$  & $\G$\\\hline
 $\Delta_{x,z}+\Delta_{z,x}-i(\Delta_{y,z}-\Delta_{z,y})$ & $\D$     & $\S$    & $\G$   & $\D'$\\\hline
 $\Delta_{y,z}+\Delta_{z,y}+i(\Delta_{x,z}-\Delta_{z,x})$ & $\D'$    & $\G$    & $\S$   & $\D$\\\hline
 $\Delta_{y,z}+\Delta_{z,y}-i(\Delta_{x,z}-\Delta_{z,x})$ & $\G$     & $\D'$   & $\D$   & $\S$\\\hline
 \hline
\end{tabular}
\caption{\label{tab:symmetry} $\k$-space symmetry of components in gap matrix $\hat{\Delta}(\k)$ in Eq. \eqref{eq:BCS-block}. The top row denotes the overall symmetry of the SC gap, while each entry denotes the $\k$-space symmetry of the corresponding component, capitalized ($S,D,D',G$) to emphasize $\k$-space symmetry only. The first 4 components (last 5) correspond to singlet (triplet) pairing.}
\end{table}

It is worth mentioning that SOC in general mixes (physical) spin singlet- and triplet-pairings. Indeed, the gap matrix $\hat{\Delta}=\hat{\Delta}^\text{singlet}+\hat{\Delta}^\text{triplet}$ can be decomposed into a singlet component $\hat{\Delta}^\text{singlet}$ and a triplet component $\hat{\Delta}^\text{triplet}$ (referred to as \textit{shadowed triplet} in \cite{clepkens2021shadowed}), where
\begin{align}
    \hat{\Delta}^\text{singlet}(\k) &=
    \begin{bmatrix*}[c]
      \Delta_{x}    & \Delta_{h}      & 0 \\
      \Delta_{h}    & \Delta_{y}      & 0 \\
      0             & 0               & \Delta_{z}
    \end{bmatrix*}\\
    \hat{\Delta}^\text{triplet}(\k) &=
    \begin{bmatrix*}[c]
      0              & i\Delta_{h'}      & \Delta_{x,z} \\
      -i\Delta_{h'}   & 0                  & \Delta_{y,z} \\
      \Delta_{z,x}   & \Delta_{z,y}      & 0
    \end{bmatrix*}\\
    \Delta_{h}  &= \frac{1}{2}(\Delta_{x,y} + \Delta_{y,x}) \\
    \Delta_{h'} &= \frac{1}{2i}(\Delta_{x,y} - \Delta_{y,x})
\end{align}

We emphasize that we do not assume SOC is weak; rather the dominant singlet behavior is a phenomenological assumption.
In this case, since $G$ waves require unnaturally long-range interactions to generate, Table \ref{tab:symmetry} implies that the $g$ wave ``naturally" lives mainly on the $x,y$-bands, while all other 1D irreps $s,d,d'$ wave should have comparable magnitudes on all three bands. 
More specifically, a $g$ wave can be generated via the form $\Delta_{x/y}(\k) = \pm D'(\k)$ which only requires 2$^\text{nd}$ nearest-neighbor interactions \footnote{Indeed, a state with $\Delta_{x,y}=\Delta_{y,x}=D$ has $g$ wave symmetry involving only nearest-neighbor pairing. However, since the hybridization $\hy$ in the band structure is generally small compared to the diagonal terms $\ve_x,\ve_y,\ve_z$, the off-diagonal components of the gap matrix $\hat{\Delta}$ are typically small for reasons unrelated to their range.}.

\subsection{Nodal Structure}

We now turn to the nodal structure of the dominantly singlet $d+ig$ phase. 
In a multi-band model, the nodal points are determined in a nonlinear manner by the SC gap $\Delta$ (i.e., by solving $\det H(\k) = 0$) and thus the nodal structure of accidentally degenerate states 
does not follow trivially from analyzing 1D irreps individually, e.g., \cite{ramires2019superconducting,ramires2022nodal}.
Typically, the stability of point nodes in 2D or line nodes in 3D is established under only under the assumption of unbroken TRS $\sT$ \cite{Blount1985,Berg2008,beri2010topologically,Sato2014}.

Below, we show that for the singlet $d+ig$ pairing in a  2D system, the existence/stability \cite{existence-stability} of point nodes along the $(110), (1\bar{1}0)$ directions can established, despite the breaking of TRS, on the basis of unbroken  $\sM_z$ (along with the usual $\sM_{xy}$)
\footnote{A similar proof can be adapted to show the existence/stability of line nodes along the $(100), (010)$ directions for $d'+ig$ pairing}.
A small admixture of subdominant triplet pairing due to SOC does not affect this result  (See Appendix \eqref{sec:stability-details}).
In a general 3D model, however, this argument only applies in the $k_z=0,\pi$ planes;  for all other (non-mirror invariant) values of $k_z$, the line nodes of the 2D problem expands into narrow surfaces of gapless Bogoliubov quasi-particles (similar to \cite{agterberg2017bogoliubov}), that become ``pinched'' into points at the $k_z = 0$ and $\pi$ planes. 
The radius of these ``Bogoliubov Fermi surfaces'' is small  in proportion to the coupling of eigenspaces $\sM_z = \pm i$ (See Appendix \eqref{sec:stability-details}).

In the \sro{} context, any such Bogoliubov Fermi surfaces would be small both as a consequence of the smallness of the spin-conserving hybridization between the $x,z$ and $y,z$ bands \footnote{Terms of the form $\bra xs|\ve(\k)|z s\ket,\bra y s|\ve (\k)|z s\ket$ for $s={\up},{\down}$, which break the action of mirror symmetry $\sM_z$ on the orbital/spin space (couple the $|\pm \ket$ bases).} and atypical forms of SOC \footnote{Terms such as $\bra x{\up}|\ve(\k)|x{\down} \ket,\bra x{\up}|\ve(\k)|y{\down} \ket$, since the typical form $\soc{}(\k) \bm{l\cdot\sigma}$ in Eq. \eqref{eq:normal} preserves $SU(2)$ action on the orbital/spin space and thus, even for $k_z \ne 0,\pi$, is decoupled in the orbital/spin bases $|\pm \ket$.} in units of the band-width. 
Thus, a multi-band $d+ig$ phase will display properties characteristic of vertical line nodes (such as a linear dependence of the density of states in energy) down to extremely low energies, consistent with the nodal behavior of the single band $d+ig$ phase.

\subsubsection{Existence of line nodes}
\label{sec:existence}

By symmetry considerations tabulated in Table \ref{tab:symmetry}, if the overall symmetry is of singlet $d+ig$-pairing, then along the $(110),(1\bar{1}0)$ directions, the only nonzero component of the gap matrix $\hat{\Delta}(\k)$ is $\Delta_x(\k)-\Delta_y(\k)$ (which we refer to as $2\Delta_3(\k)$ and possibly a complex number, i.e., $\Delta_3(\k) \equiv e^{i2\theta(\k)} |\Delta_3(\k)|$).
It's then clear that by performing a $\k$-conserving gauge transform, the quasi-particle Hamiltonian $H_+(\k)$ with complex $\Delta_3(\k)$ is mapped unitarily to that with nonnegative $|\Delta_3(\k)|$, i.e.,
\begin{equation}
    e^{-i\theta \tau_3} H_+(\hat{\ve}_+,\Delta_3) e^{i\theta \tau_3} =  H_+(\hat{\ve}_+,|\Delta_3|)
\end{equation}
Where $\tau_3$ denotes the Pauli matrix in particle-hole space and $H_+$ depends on $\k$ implicitly via $\hat{\ve}(\k),\Delta_3(\k)$. 
Equivalently, the gauge transform maps the TRS breaking $\k$-conserving Hamiltonian $H_+(\k)$ to such that preserves TRS \footnote{Indeed, gauge symmetry is the reason why single-band $d+ig$ states can still have stable line nodes despite breaking TRS} and thus a line node can be found explicitly along the $(110),(1\bar{1}0)$ directions (see Appendix \eqref{sec:nodal-point}).
It is worth mentioning that if we consider a triplet $d+ig$ pairing, by Table \ref{tab:symmetry-mapping} or \ref{tab:symmetry}, there are two nonzero components along the $(110),(1\bar{1}0)$ directions and thus the argument breaks down.

\section{Numerical Results}
\subsection{Setup}
\begin{table}[h!]
\centering
\begin{tabular}{||r r| r| r| r| r||}
 \hline
  Lattice   & Harmonic            & $R=0$    & $R=1$                 &  $R=2$\\ [0.5ex]
 \hline\hline
 $(A_{1g})$ &$\hs_R(\k)$           & $1$      & $\cos k_x +\cos k_y$  &  $2\cos k_x \cos k_y$\\
 \hline
 $(B_{1g})$ &$\hd_R(\k)$           & 0       & $\cos k_x -\cos k_y$  &  0\\
 \hline
 $(B_{2g})$ &$\hd'_R(\k)$          & 0        &  0                    & $2\sin k_x \sin k_y$ \\
 \hline
\end{tabular}
\caption{\label{tab:harmonics}First three harmonics ($R\le 2$) based on symmetry and normalized via their squared integral. The $0$ implies that there does not exist such a symmetry component for the given range. We have omitted the $\hg_R(\k)$-wave since it does not appear until 4th nearest-neighbor interactions $(R=4)$.}
\end{table}

Since on physical grounds, we plan to deal only with the case in which the effective interactions are short-ranged, it is convenient to introduce \textit{lattice harmonics}, specified by their range $R$ and their symmetry, which form an ordered basis for any lattice function.
The first few such harmonics for each relevant symmetry are shown in Table \ref{tab:harmonics}, normalized via their squared integral ($L^2$-norm).  
Thus, for example, we can decompose a $\D$ wave in $\k$-space into a linear combination of lattice harmonics $\hd_R (\k)$ of range $R$ (where $R=0$ indicates onsite, $R=1$ is nearest neighbors, $R=2$ is next-nearest neighbors, etc.).

We can then consider an effective interaction in the BCS channel of the form
\begin{align}
    \label{eq:interaction}
    \sV = \frac{1}{2}\sum_{\mu\mu',\k\k'} V_{\mu \mu'} (\k-\k') P^\dagger_{\mu}(\k) P_{\mu'} (\k')
\end{align}
where $P_\mu (\k) = \psi_{\down \mu}(-\k) \psi_{\up \mu}(\k)$ 
annihilates a  Cooper pair in band $\mu$ and $V_{\mu \mu'}(\q)$ represents the interaction between bands $\mu,\mu'$. 
By construction, the form of $V_{\mu \mu'}(\q)$ precludes interband pairing of the SC gap, i.e. the self-consistency equations imply $\Delta_h=0$;  a more general form of the interaction would generally lead to a small but non-zero $\Delta_h$, but this would not change any of our findings qualitatively.

We shall take interactions that respect the lattice symmetries and have range $R \le 2$ \footnote{By symmetry, it's possible that $V_{x,x}(\q) \ne V_{y,y}(\q)$ and $V_{x,z}(\q)\ne V_{y,z}(\q)$. However, the differences $(V_{x,x} -V_{y,y})(\q)$ and $(V_{x,z}-V_{y,z})(\q)$ must satisfy $B_{1g}$ symmetry with respect to $\q$. Therefore, for interaction range $R\le 2$, the interactions can only mix $s$ and $d_{x^2-y^2}$ wave harmonics, e.g., $(V_{x,x} - V_{y,y})(\k-\k') = s_1(\k) d_1(\k') + d_1(\k) s_1(\k') + \cdots $, and thus have no direct effect on the $R=2$ harmonics involving the $d_{xy}$ or $g$-waves.}:
\begin{equation}
    \label{eq:interaction-range}
  V_{\mu\mu'} (\q) = \sum_{R=0,1,2} v_{\mu\mu'}(R)  \hs_R (\q),
\end{equation}
where $\hs_R$ is given by the first row in Table \ref{tab:harmonics}.
It is then convenient to express $V$ as a sum of terms of the form $c_a\psi_a(\k) \psi_a(\k')$ where each $\psi_a(\k)$ transforms according to one of the irreps ($S,D,D'$) of the point group and decomposed into harmonics tabulate in Table \ref{tab:harmonics}.
We leave the details of the decomposition in Appendix \eqref{sec:interaction}.
For completeness, the nonlinear gap equation is given here
\begin{align}
    \label{eq:nonlinear-gap}
    \Delta_\mu(\k) = \sum_{\mu'\k'}V_{\mu\mu'}(\k-\k')\langle P_{\mu'}(\k')\rangle
\end{align}
where the expectation value $\bra\cdots \ket$ is taken with respect to the BCS Hamiltonian.

For simplicity, we will henceforth set SOC to zero, since it does not qualitatively affect the results. 
In particular, the Lifshitz transition occurs at $\k$-points $(\pi,0),(0,\pi)$, where the orbital character of each band is well-defined and thus the SOC terms are weak in any case.
In this limit,  the 2-band system consisting of the 
$\alpha$ and $\beta$ bands decouples from the $\gamma$ band within the BCS Hamiltonian, though all three bands are still coupled via the interaction term $\sV$ in the nonlinear gap equation in Eq. \eqref{eq:nonlinear-gap}.
Using the fact that $SU(4)$ is the spin group of $SO(6)$, we provide an intuitive manner of diagonalizing the $4\times 4$ BdG Hamiltonian involving the 2-band system, details of which are left in Appendix \eqref{sec:su4-so6}.

\subsection{Zero temperature $T=0$}
\begin{figure}
\centering
\includegraphics[width=1\columnwidth]{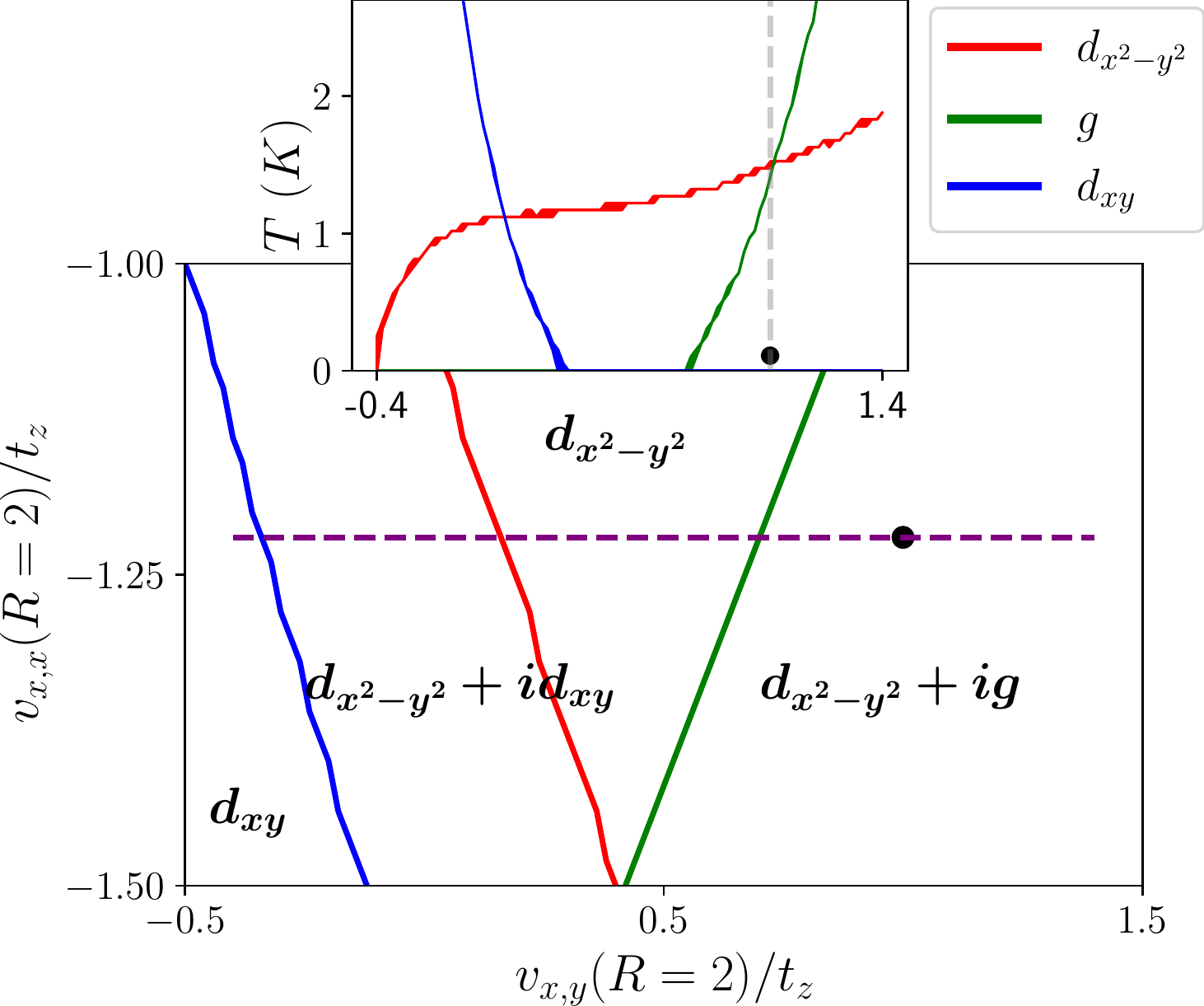}
\caption{
The $T=0$ phase diagram in a representative 2D plane in the multi-dimensional parameter space of interactions.  
The black point corresponds to a point at which $d$ and $g$ are degenerate, with  parameters listed in Table \ref{tab:interactions} - we will use these parameters in later analysis including that reported in Figs. \ref{fig:TK-uniaxial} and \ref{fig:pure-strain}.  
In the main figure, all interactions with range $R< 2$ are held fixed,  while those with $R= 2$ are varied maintaining the (arbitrarily chosen) relation $v_{z,z} = 0.73 v_{x,y} - 0.43$, where the slope is positive so that $v_{z,z},v_{x,y}$ do not compete \cite{slope}.
The inset represents the thermal phase diagram along the 1D cut through parameter space indicated by the dashed line in the main figure.
The different color solid lines are phase boundaries corresponding to the point at which the indicated symmetry order parameters vanish.
}
\label{fig:phase-diagram}
\end{figure}

\begin{table}
\centering
\begin{tabular}{||r| r| r| r| r||}
 \hline
  Interaction $v/t_z$     & $R=0$    & $R=1$                 &  $R=2$\\ [0.5ex]
 \hline\hline
 $v_{z,z}(R)$            & $7$      & $-0.4$  &  $0.3$\\
 \hline
 $v_{x,x}(R)$            & $5$      & $-0.1$  &  $-1.2$\\
 \hline
 $v_{x,y}(R)$            & $5$      & $-0.1$  &  $1$\\
 \hline
 $v_{x,z}(R)$            & $5$      & $0.1$   &  $0.1$ \\
 \hline
\end{tabular}
\caption{\label{tab:interactions}Specific parameters (in units of the nearest-neighbor hopping $t_z$ on the $\gamma$ band) chosen of the interaction term in Eq. \eqref{eq:interaction-range} corresponding to the dot in Fig. \ref{fig:phase-diagram}. Note that in Fig. \ref{fig:phase-diagram}, only $v_{x,x}(2),v_{x,y}(2)$ were varied. }
\end{table}

Fig. \ref{fig:phase-diagram} shows a cut through the $T=0$ phase diagram in response to intra- and inter-band couplings, obtained by solving the self-consistent BCS equations described in Eq. \eqref{eq:nonlinear-gap}. 
Indeed, as previously argued, by Table \ref{tab:symmetry} and \ref{tab:harmonics}, a $g$ wave SC gap with range $R\le 2$ must satisfy $\Delta_z = \Delta_x +\Delta_y=0$ and thus must be constructed via the anticipated form $\Delta_{x/y}(\k) =\pm \D'(\k)$.
Moreover, since $\hd'_R(\k)$ is first nonzero for $2^\text{nd}$ neighbor $R=2$, the range $R=2$ interactions (including both intra- $v_{x,x},v_{z,z}$ and inter-band $v_{x,y}$) are responsible for generating $d_{xy}$- or $g$-wave symmetries.
The remaining interaction parameters are fixed and given in Table \ref{tab:interactions}, tuned so that at $T=0$ there exists a background $d_{x^2-y^2}$-wave.
Throughout, the preferred phase relation between the two components is such as to break time-reversal symmetry, i.e. $d_{x^2-y^2} \pm i g$ or $d_{x^2-y^2} \pm i d_{xy}$.

In the vast parameter space of interactions, it is probably unsurprising that we can find regions in which the $d_{x^2-y^2}$- and $g$-waves coexist.  However, while in any single band problem, such coexistence regions are generically  exceedingly narrow, here the coexistence region is relatively large reflecting the fact that the two components live primarily on different bands, and so hardly compete with one another
\footnote{Along the $x$-axis of Fig. \ref{fig:phase-diagram}, the presence of a $d_{xy}$ wave would imply comparable magnitude on all three bands and thus induces strong competition with the $d_{x^2-y^2}$ wave. 
Therefore, the accidental degeneracy between the $d_{x^2-y^2}$ and $d_{xy}$ state is quickly destroyed by the preference for a pure $d_{xy}$ wave as shown in inset of Fig. \ref{fig:phase-diagram}. 
In contrast, the $g$ wave component mainly lives on the $\alpha,\beta$ bands by symmetry requirements and thus there is a natural ground state where the background $d_{x^2-y^2}$ shifts progressively to the $\gamma$ band to minimize competition.  A similar logic would apply in the instance where the background state is an $s$-wave instead of a $d_{x^2-y^2}$-wave.
}.
Within the coexistence regions, the transition temperatures of the two orders are comparable, i.e., $T_c \sim T_\text{trsb}$.
Indeed, in the inset graph in Fig. \ref{fig:phase-diagram}, we plot the onset temperature $T_\text{onset}$ of each component ($d_{x^2-y^2},g,d_{xy}$ waves) along the given cut (dashed purple line) in the phase diagram.
As the secondary order component transitions from $d_{xy} \to g$ wave (left to right), the corresponding background $d_{x^2-y^2}$ wave has an increase in $T_c$.

\subsection{Splitting of $T_c,T_\text{trsb}$ with uniaxial stress}
\begin{figure}
\centering
\includegraphics[width=1\columnwidth]{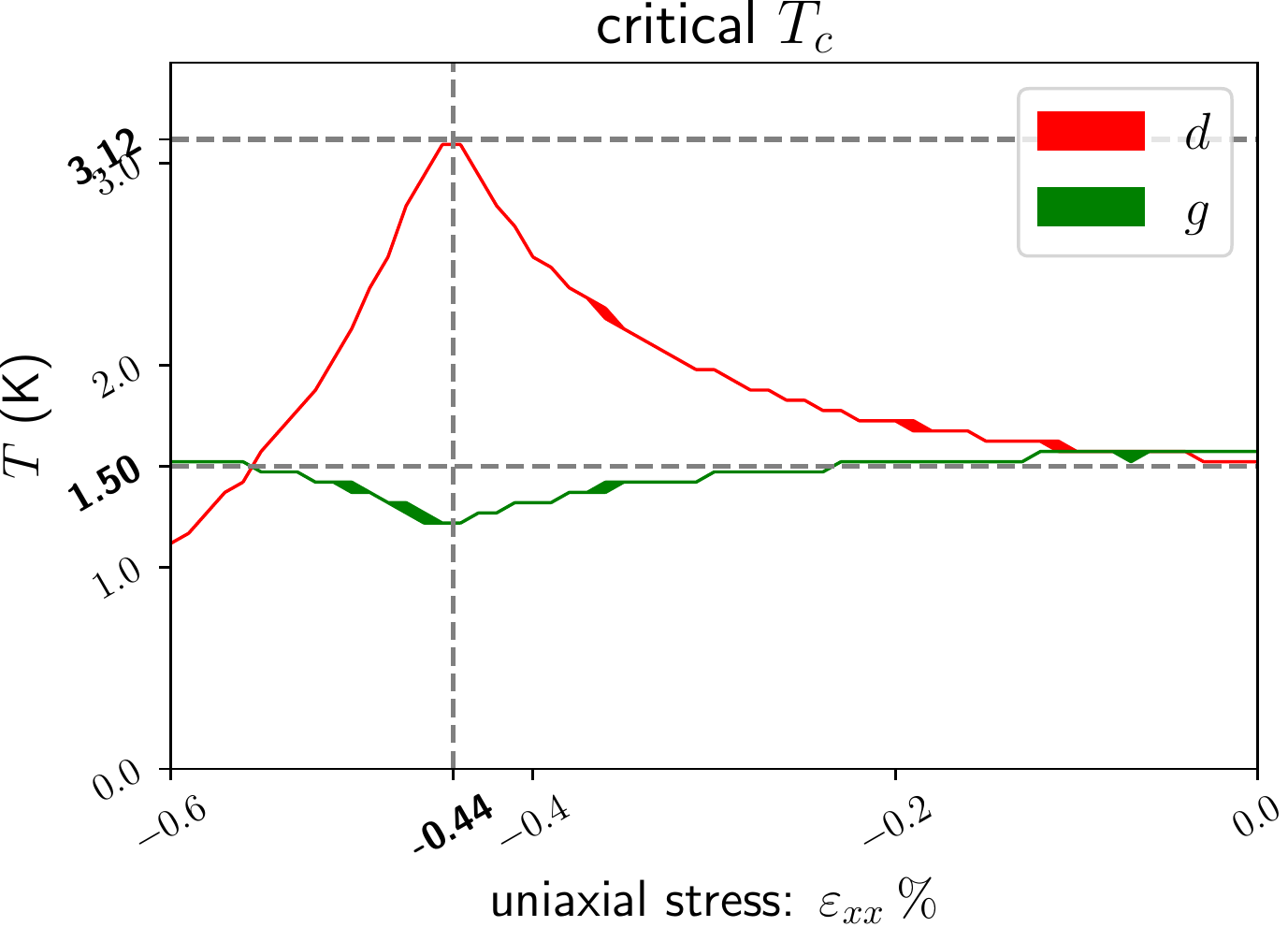}
\caption{Critical Temperatures. The red and green lines represent the critical temperatures of the $d$- and $g$-wave components.
(The thickness denotes numerical uncertainty in the result based on calculations on a $3000^2$-site square lattice.) 
The Lifshitz transition occurs at a uniaxial stress $\ve_{xx} \approx -0.44\%$.}
\label{fig:TK-uniaxial}
\end{figure}

Let us tune the parameters to the dot in Fig. \ref{fig:phase-diagram} so that the ground-state has $d_{x^2-y^2}\pm ig$ pairing and $T_c\approx T_\text{trsb}$ in the absence of strain.
We model the effect of uniaxial stress (parametrized by $\ve_{xx}$) by varying the band parameters in a manner consistent with experiment \cite{li2022elastocaloric,sunko2019direct} such that the $\gamma$-band crosses the van Hove point $(0,\pm \pi)$ at $\ve_{xx} \approx -0.44\%$ (details in Appendix \eqref{sec:strain-details}). 
The $\alpha,\beta$-bands are also distorted, but only slightly \cite{sunko2019direct}.

The resulting critical temperature for the $d$ and $g$-components are given in Fig. \ref{fig:TK-uniaxial}.
Due to symmetry breaking uniaxial stress, the point group symmetry $D_{4h}$ is broken, while the subgroup $D_{2h}$ is preserved.
This implies that the $d_{x^2-y^2}$ component in general gains an $s$-wave component.
Similarly, the $g$-wave should also gain a $d_{xy}$-wave component.
However, our numerical calculations show that the additional $d_{xy}$ is much smaller than the $g$-wave component.
This is presumably due to the fact that the $\alpha,\beta$-bands are minimally distorted in the presence of strain, while the strong $d$ or $d+s$-wave component on the $\gamma$-band suppresses any other symmetry component.
Therefore, we label the corresponding symmetries as given in the legend of Fig. \ref{fig:TK-uniaxial}.

In the presence of small uniaxial stress $\ve_{xx} \gtrsim -0.2\%$, away from the Lifshitz transition, the critical temperatures remain remarkably close to each other, and thus the accidentally degenerate $d+ig$-state is stable when the DOS remains roughly constant.
Near the van Hove point $\ve_{xx}\approx-0.44 \%$, the $d$-wave channel is enhanced causing the observed split in $T_c >T_\text{trsb}$.
At a value of the strain somewhat beyond the van Hove point, $\ve_{xx} \lesssim -0.5\%$, our calculations suggest that the $g$-wave component is first observed when entering superconductivity.
However, note that in the context of SRO, 
there exists an observed competing spin-density-wave (SDW) \cite{grinenko2021split} which is not incorporated in our calculations.
\subsection{Response to pure $B_2$ and $A_1$ strain}

\begin{figure}[ht]
\subfloat[\label{fig:B2}]{%
  \centering
  \includegraphics[width=0.8\columnwidth]{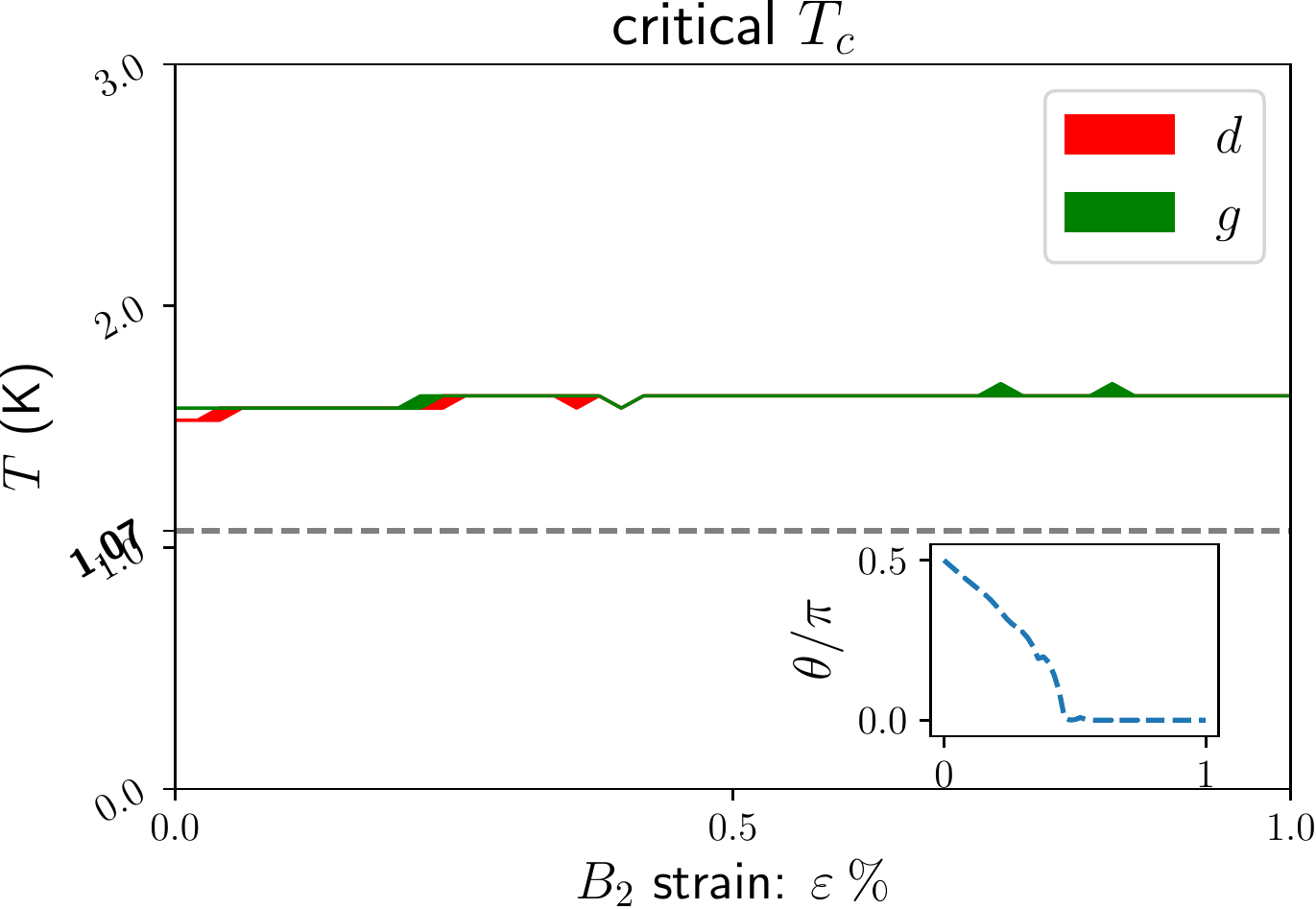}
}
\\
\subfloat[\label{fig:A1}]{%
  \centering
  \includegraphics[width=0.8\columnwidth]{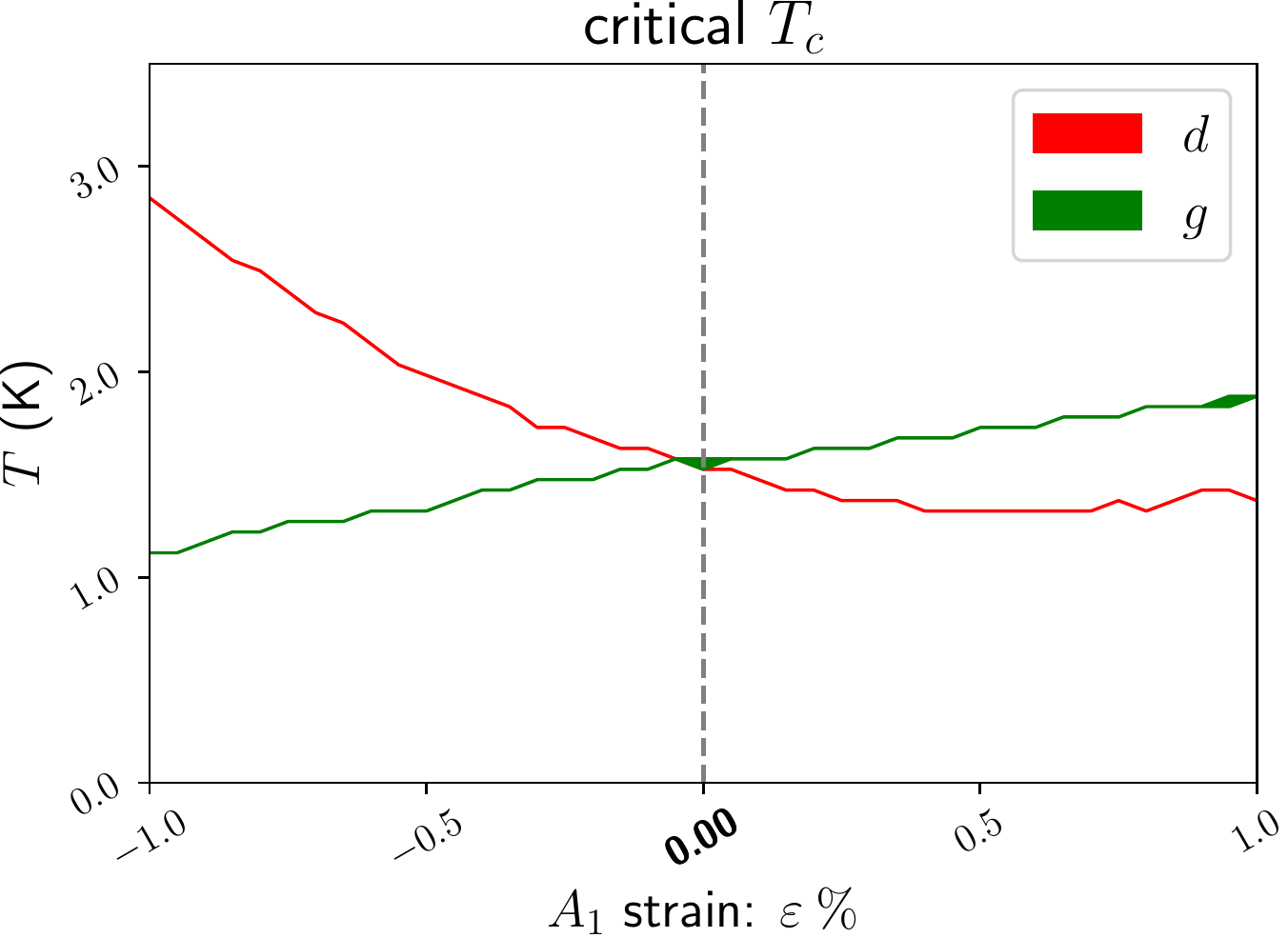}
}
\caption{The red and green lines represent the critical temperature $T$ at which the $d$- and $g$-waves occur under (a) pure $B_2$ strain and (b) pure $A_1$ strain. 
The inset in subplot (a) shows the relative phase between the $d$ and $g$-waves near $T_c$, i.e., along the horizontal dashed line in subplot (a).}
\label{fig:pure-strain}
\end{figure}

We again tune the parameters to the dot in Fig. \ref{fig:phase-diagram} so that the $d,g$ waves are accidentally degenerate and investigate the response of our multi-band model to pure $B_2$ and pure $A_1$ strain. 
The resulting critical temperatures regarding the onset of $d$ and $g$ waves are shown in Fig. \ref{fig:pure-strain}.
In Fig. \ref{fig:B2}, symmetry breaking $B_2$ strain is simulated by modifying the 2$^\text{nd}$ nearest neighbor ($R=2$) hoppings in the normal state band structure (details in Appendix \eqref{sec:strain-details}).
In this case, the critical temperatures remain remarkably close to each other even at large strain values (at which the relative phase between the $d,g$ components is $\theta=0$) and thus provides further evidence to the stability of the accidental degeneracy in the presence of strain. 
In Fig. \ref{fig:A1}, we introduced pure $A_1$ strain by tuning the relative energies of the three bands so that the $\gamma$ band is driven towards (away from) the Van Hove singularity for negative (positive) values $\ve$, with the total density of electrons held fixed  (see Appendix \eqref{sec:strain-details}).
This results in a cross over between the $d,g$ waves, i.e., for $\ve <0$ the $d$-wave is favored over the $g$ wave and vice-versa, resulting in a kink-like feature in the overall $T_c$ (maximum of the two lines in Fig. \ref{fig:A1}) at $\ve=0$.  
No such kink has been detected in experiments on SRO to date \cite{jerzembeck2022superconductivity}.
\section{Summary}

Although \sro{} would seem to be the ideal material to serve as the model system for unconventional superconductivity, given that its normal state is an extremely well characterized Fermi liquid, even the symmetries of the superconducting state has remained controversial \cite{mackenzie2017even}. 
Ultimately, this issue can only be settled by either reproducible phase-sensitive measurements, or by direct imaging (e.g. by angle-resolved photoemission or quasi-particle interference) of the gap structure on the Fermi surfaces. 
However, in the absence of these, further progress must rely on more indirect evidence based on comparisons between relatively robust aspects of  theory and experimentally detected trends. 
Some of these aspects depend only on symmetry.  However, while  microscopic aspects of the problem are more difficult to access unambiguously, certain features, especially those that relate directly to qualitative aspects of the known band-structure, can be useful for the present purposes.

Here, we have analyzed a simple microscopic model with the band-structure of \sro{} and effective pairing interactions treated as phenomenological input 
(we make no claims to the origin of the effective interaction).
We have found several features of the solution of this problem that lend credence to the conjecture that the peculiar difficulty in settling the order parameter question arises from an accidental near-degeneracy between a $d_{x^2-y^2}$-wave and $g$-wave pairing tendency. Specifically, we find the following suggestive results:
\begin{itemize}
    \item It is sufficient to consider a model with relatively short-ranged pairing effective interactions - out to second neighbor distance.
    In common with many previous studies \cite{scaffidi2014pairing,agterberg1997orbital,raghu2010hidden,huo2013spin}, we find that the generic result is that pairing is dominant either on the $\gamma$ band or on the $\alpha$ and $\beta$ bands.  Thus, in any case, a certain degree of fine-tuning of the interactions is necessary to insure that the gap magnitude is comparable on all bands (as is experimentally  established \cite{lederer}).
    \item  Under circumstances of near degeneracy, the $d_{x^2-y^2}$-wave pairing occurs dominantly on the $\gamma$ band and the $g$-wave on the $\alpha$ and $\beta$ bands. Loosely, this near-degeneracy requires no more fine-tuning than is required to have comparable gaps on all bands.
    \item As a consequence, if the band-structure is tuned (for instance by uniaxial stress) such that the $\gamma$ band approaches the nearby Van Hove point, this can significantly enhance the $d$-wave component of the order parameter, but has relatively little effect on the $g$-wave.
    (Generally, if two components coexist in the same bands, then whatever enhances one  tends to suppress the other.)
    \item The existence of gapless (nodal) quasi-particles is protected by symmetry, even if time-reversal symmetry is broken in a dominantly singlet $d_{x^2-y^2}+ig$ state, so long as the mirror symmetry $\sM_z$ (along with the usual $\sM_{xy}$) of the crystal is unbroken. Conversely, similar to the single-band model, $\sM_{xy}$ breaking strain can induce a SC gap which can be detected in experiment.
\end{itemize}

While none of these results is sufficiently unique to serve as confirmation of the basic scenario, they serve to increase confidence in its ``naturalness.'' 

\textit{Note added}. During the preparation of the current manuscript, the following pre-print \cite{PhysRevB.106.134512} appeared on arXiv, in which the authors stabilized a $d_{x^2-y^2} +ig$ state via introducing a 2nd nearest-neighbor term in the bare interaction. Although the calculation was done in the weak coupling limit in terms of RPA, it agrees with our proposal that a relatively short-ranged model is sufficient to stabilize a $g$-wave component. 
In contrast, the following paper \cite{sheng2022multipole} proposes a multipole-fluctuation pairing mechanism to stabilize a $d_{x^2-y^2} +ig$ state. In doing so, they assumed a long-range interaction ($\sim 1/q^2$ in momentum space) which may have been the reason for stabilizing a secondary $g$-wave symmetry.

\begin{acknowledgments}
SAK and AY were supported, in part,  by NSF grant No.  DMR-2000987 at Stanford. EB was supported by the European Research Council (ERC) under grant HQMAT (Grant Agreement No. 817799), the Israel-US Binational Science Foundation (BSF), and a Research grant from Irving and Cherna Moskowitz.
\end{acknowledgments}

\bibliography{main.bbl}
\onecolumngrid
\section{Appendix}
\subsection{Band structure}
\label{sec:band-structure}
The normal state dispersion values used in this paper are found in the Supplementary Information (SI) of Ref. \cite{burganov2016strain}. 
We repeat them here for completeness, i.e.,
\begin{align}
    \label{eq:band-structure-details}
    \ve_{x/y} (\k) &= -\mu_0 - 2t_x \cos k_{x/y} -2t_y \cos k_{y/x} \\
    \ve_z (\k) &= -\mu_z -2t_z (\cos k_x +\cos k_y) -2t_z'(2 \cos k_x \cos k_y) \\
    \hy (\k) &= 4 t_h \sin k_x \sin k_y
\end{align}
Where for \sro{}, we use the values
\begin{table}[h!]
\centering
\begin{tabular}{|r|r|r|r| r| r| r||}
 \hline
 $t_z$ (meV)  & $t_x$ (meV)   & $t_y/t_x$   &  $t_h/t_x$ & $\mu_z/t_z$ & $\mu_x/t_x$ \\ [0.5ex]
 \hline\hline
 119  & 165   & 0.08    & 0.13 &  1.48 & 1.08 \\\hline
\end{tabular}
\end{table}
The SO coupling was taken to be $\soc{}=30$ meV.
\subsection{Symmetry Operators}
\label{sec:symmetry-operations}
We shall briefly discuss the explicit forms and properties of symmetry operators, i.e., particle-hole symmetry (PHS) $\sC$, time-reversal symmetry (TRS) $\sT$, and generators of the point group $D_{4h}$, i.e., parity $\sP$, mirror symmetries $\sM_z,\sM_{xy}$ and $\pi/2$-rotation symmetry $\sC_4$. Notice that, except for PHS $\sC$, all other symmetries discussed will be restricted to the particle space. Their extension to particle-hole space follows uniquely that they commute with PHS $\sC$. More specifically, $\bra a_\text{h}|\sO|b_\text{h}\ket = \bra a_\text{p}|\sO^*|b_\text{p}\ket$ where $\sO^*$ denotes complex conjugation of symmetry operator $\sO$ and the subscript $\text{h},\text{p}$ denote hole, particle states..

\begin{enumerate}
    \item PHS $\sC$ is an anti-unitary map on particle-hole space, which take particle state to hole state and vice-versa, and maps the wave vector $\k\mapsto -\k$. In particular, $C^2=1$.
    \item TRS $\sT$ is regarded as anti-unitary map on particle space, which acts as $\exp(-i\pi \sigma_2/2)$ on spin space, acts trivially on orbital space and maps the wave vector $\k\mapsto -\k$. In particular, TRS reverse the spin operators, i.e., $\sT \sigma_i \sT^\dagger = -\sigma_i$ where $\sigma_i$ are the Pauli matrices acting on spin space.
    \item Parity $\sP$ is consistent with its real space transform $x,y,z\mapsto -x,-y,z$. Hence, it acts trivially on spin and orbital space, and maps $\k\mapsto -\k$.
    \item Mirror symmetry $\sM_z$ is consistent with its real space transform $x,y,z\mapsto x,y,-z$. More specifically, it acts as $\exp(-i\pi \sigma_3/2)$ on spin space (an $SU(2)$ action which canonically maps to $\sP \sM_z \in SO(3)$),  maps $d_{yz},d_{zx},d_{xy}\mapsto -d_{yz},-d_{zx},+d_{xy}$ orbitals, and maps the wave vector $\k \mapsto \sM_z^{-1} \k \equiv (k_x,k_y,-k_z)$. Due to spin space, $\sM_z^2 =-1$
    \item Mirror symmetry $\sM_{xy}$ is consistent with its real space transform $x,y,z\mapsto y,x,z$. More specifically, it acts as $\exp(-i\pi (\hat{n} \cdot \sigma)/2)$ on spin space where $\hat{n}$ is the unit vector in $(-1,1,0)$ direction (an $SU(2)$ action which canonically maps to $\sP \sM_{xy} \in SO(3)$),  maps $d_{yz},d_{zx},d_{xy}\mapsto d_{zx},d_{yz},d_{xy}$ orbitals, and maps the wave vector $\k \mapsto \sM_{xy}^{-1} \k \equiv (k_y,k_x,k_z)$. Due to spin space, $\sM_{xy}^2 =-1$
     \item $\pi/2$-rotation $\sC_4$ is consistent with its real space the transform $x,y,z\mapsto y,-x,z$. More specifically, it acts as $\exp(-i\pi \sigma_3 /4)$ on spin space (an $SU(2)$ action which canonically maps to $\sC_4 \in SO(3)$),  maps $d_{yz},d_{zx},d_{xy}\mapsto -d_{zx},d_{yz},-d_{xy}$ orbitals, and maps the wave vector $\k \mapsto \sC_4^{-1} \k \equiv (-k_y,k_x,k_z)$.
\end{enumerate}

Notice that within certain planes (e.g., $k_z=0,\pi$ plane for $\sM_z$ and $k_x=k_y$ plane for $\sM_{xy}$), the mirror symmetries $\sM_z,\sM_{xy}$ are local in $\k$ and thus only act non-trivially in the orbital/spin space.
Therefore, it's useful to define $\lz,\lxy$ as their orbital/spin actions so that $\sM_z =\lz \otimes (k_z\mapsto -k_z)$ and $\sM_{xy} =\lxy \otimes (k_x\leftrightarrow k_y)$
\subsubsection{Normal State Symmetries}
\label{sec:normal-state}
Using the explicit forms of point group symemtries of $D_{4h}$ discussed in the previous section, one finds that the matrix components of the normal state dispersion are mapped in a rather complicated manner and thus tabulated in Table \ref{tab:normal-state-symmetry-mapping}.
Since the normal state is invariant under the point group $D_{4h}$, it's then clear that  $\ve_x +\ve_y, \ve_x-\ve_y, \ve_h$ satisfy $S,D,D'$-wave ($A_{1g},B_{1g},B_{2g}$) symmetry in $\k$-space, respectively.
Compare this to the specific parametrization in Appendix \eqref{sec:band-structure}. A similar analysis can be done for the SC state with regards to the SC gap $\Delta(\k)$.

\begin{table}[h!]
\centering
\begin{tabular}{||c |c||}
 \hline
 Operation $g$   & Mapping \\ [0.5ex]
 \hline\hline
 $\sC_4$   & 
 $\begin{aligned}
      \ve_z (\k) &\mapsto \ve_z(\k') \\
     (\ve_x \pm \ve_y) (\k) &\mapsto \pm (\ve_x \pm \ve_y) (\k')\\
      \hy (\k) &\mapsto -\hy (\k')\\
      \soc{} (\k) &\mapsto \soc{} (\k')
 \end{aligned}$ \\
 \hline
 $\sM_{x,y}$   & 
 $\begin{aligned}
     \ve_z (\k) &\mapsto \ve_z(\k')\\
     (\ve_x \pm \ve_y) (\k) &\mapsto \pm (\ve_x \pm \ve_y) (\k')\\
      \hy (\k) &\mapsto +\hy (\k')\\
      \soc{} (\k) &\mapsto \soc{} (\k')
 \end{aligned}$\\
 \hline
\end{tabular}
\caption{\label{tab:normal-state-symmetry-mapping} 
Mapping of group operations $\sO = \sC_4, \sM_{xy}$ of $D_{4h}$ where $\k' \equiv \sO^{-1} \k$.}
\end{table}

\subsection{Existence of Line Node}
\label{sec:nodal-point}

Continuing the argument from Section \eqref{sec:existence}, notice that $H_+(\hat{\ve},|\Delta_3|)$ can be decomposed into $= \tau_1 \lambda_3 |\Delta_3| + \tau_3\hat{\ve}_+$ (where $\lambda_3=\text{diag}(1,-1,0) $ is the 3rd Gell-Mann matrix) and thus anti-commutes with $\tau_2$.
Hence, the determinant (which is invariant under unitary transforms) is given by
\begin{equation}
    \det H_+(\k) = |\det (\hat{\ve}_+(\k) + i|\Delta_3(\k)| \lambda_3)|^2
\end{equation}
Where $\k$ is restricted along the $(110),(1\bar{1}0)$ directions.
Therefore, there exists a nodal point if and only if $\det (\hat{\ve}_+ + i|\zeta| \lambda_3) =0$. 
Notice that mirror symmetry $\sM_{xy}$ maps $\lambda_3\mapsto -\lambda_3$ while keeping $\hat{\ve}_+$ invariant. Therefore, $\det (\hat{\ve}_+ + i|\Delta_3| \lambda_3) =\det (\hat{\ve}_+ -i|\Delta_3| \lambda_3)$ is a real value along the $(110),(1\bar{1}0)$ directions. 
The nodal equation $\det (\hat{\ve}_+ + i|\Delta_3| \lambda_3) =0$ thus corresponds to one real equation (instead of 2 where both the real and imaginary components are required to be zero), and thus defines a line node along the $(110),(1\bar{1}0)$ directions. 

For completeness, the nodal equation can be computed explicitly to be
\begin{equation}
  \label{eq:nodal-withSOC}
  |\Delta_3(\k)|^2 \ve_z(\k) + \det \hat{\ve}(\k) = 0
\end{equation}
And in the absence of SOC, $\soc=0$, the equation reduces to
\begin{equation}
  \label{eq:nodal-withoutSOC}
  |\Delta_3(\k)|^2 = \hy(\k)^2 -\ve_x(\k)^2
\end{equation}
Since the gap function is usually the smallest scale in the system, the equation generally holds true at some point along the diagonals, near the intersection of the FS of the $x,y$ bands, despite the fact that $\Delta_3 \ne 0$ \footnote{If on the unlikely event that the $|\Delta_3|$ is sufficiently large so that the Eq. \eqref{eq:nodal-withoutSOC} never holds, there still exists a nodal point along the $\gamma$ band since SOC is assumed to be zero for \eqref{eq:nodal-withoutSOC}.}.
\subsection{Rigorous perturbative argument}
\label{sec:stability-details}

\subsubsection{Setup}
Let us consider a general 3D multi-band $d_{x^2-y^2}+ig$ state $H(\k)$ with spin degree of freedom so that $H(\k)$ is a $4n\times 4n$-matrix at each wave vector $\k$ ($n=3$ is the number of orbitals in \sro{}). We shall assume that $H(\k)$ has arbitarily strong SOC and dominantly singlet with possible subdominant triplet pairing and weak $k_z$-dependence so that we can treat the triplet pairing and $k_z$ dependence perturbatively. 
Let $H^\text{s}(\k)$ be obtained from $H(\k)$ by setting the triplet pairing gap elements $=0$, i.e., the purely singlet portion so that in particle-hole space,
\begin{equation}
    H(\k) = H^\text{s}(\k) + 
    \begin{bmatrix}
      0 & \Delta^\text{triplet}(\k) \\
      \Delta^\text{triplet}(\k)^\dagger & 0
    \end{bmatrix}
\end{equation}

Let us further define
\begin{align}
    H_0^\text{s}(\k) &= \sum_{m=\pm i} 1_{\lz = m} H^\text{s}(\k) 1_{\lz = m} \\
    H_0(\k)&= \sum_{m=\pm i} 1_{\lz = m} H(\k) 1_{\lz = m}
\end{align}
Where $1_{\lz = \pm i}$ are the projection operators onto the eigenspaces of the orbital/spin action $\lz = \pm i$. In this case, $H_0(\k), H_0^\text{s}$ are decoupled in the eigenspaces $\lz=\pm i$ which was the essential ingredient in our restricted 2D model in the main text. 
It is worth mentioning that here, $H_0(\k)$ and $H_0^\text{s}(\k)$ may have $k_z$-dependence.

Notice that mirror symmetry $\sM_{xy}$ maps $\Delta(\k) \mapsto -\Delta(\k)$ so that after a gauge transform $\sU = i\tau_3$ (where $\tau_3$ is the pauli matrix in particle-hole space), the Hamiltonian $H(\k)$ is invariant $\sU \sM_{xy}\equiv \tilde{\sM}_{xy}$. 
Subsequently, the restricted Hamiltonians $H^\text{s},H_0,H_0^\text{s}$ are also invariant under $\tilde{\sM}_{xy}$.
\subsubsection{Starting Point: Decoupled Singlet Hamiltonian}
As discussed in the main text, since $H_0^\text{s}(\k)$ is of purely singlet pairing and is decoupled in the eigenspaces of $\lz=\pm i$, it has a line node extended in the $k_z$ direction in the $(110)$ direction, which we denote by $\k_0^\text{s}(k_z)$.
Let us fix $k_z$, and thus from Theorem \eqref{theorem:quartic}, it is clear that $\dim \ker H_0^\text{s}(\k_0^\text{s})=4m$ for some $m\ge 1$.
In the absence of other symmetries, it is generally assumed that $m=1$ so that we can choose $\phi_0^\text{s}(\k_0^\text{s})$ so that $\{\phi_0^\text{s}(\k_0^\text{s})\}=(\phi_0^\text{s}, \lxyg\phi_0^\text{s}, \sC\sP \phi_0^\text{s}, \sC\sP \lxyg\phi_0^\text{s})$ forms a basis for $\ker H_0^\text{s}(\k_0^\text{s})$, and we shall use $P_0^\text{s}(\k_0^\text{s})$ to denote the projection operator onto $\ker H_0^\text{s}(\k_0^\text{s})$.

\subsubsection{Adding mirror symmetric perturbations, e.g., weak triplet pairing}
Let us fix $k_z$ as before, and let $\k$ be near $\k_0^\text{s}(k_z)$ along the $(110)$ direction so that $\tilde \sM_{xy} = \lxyg$.

Since triplet pairing is subdominant, we can assume that $H_0(\k) -H_0^\text{s}(\k_0^\text{s})$ with respect to the gap of $H_0^\text{s}(\k_0^\text{s})$ (lowest nonzero energy level of $H_0^\text{s}(\k_0^\text{s})$ and not to be confused with the gap function $\Delta(\k_0^\text{s})$) when $\k$ is near $\k_0^\text{s}$ so that $P_0(\k)$ can be well-defined as the projection operator onto eigenstates of $H_0(\k)$ with energies near zero and also unitarily equivalent to $P_0^\text{s}(\k_0^\text{s})$ as explained in subsection \eqref{sec:Riesz}.
By perturbation theory (Weyl's inequality), the variation of the energy values must be small and thus it's sufficient to consider the restriction of $H_0(\k)$ to $P_0(\k)$ when attempting find the zero energies of $H_0(\k)$, i.e., only eigenstates in $P_0(\k)$ have energies sufficiently close to zero.

Since $H_0(\k)$ is decoupled in the eigenspaces of $\lz=\pm i$ and invariant under mirror symmetry $\lxyg$, from Theorem \eqref{theorem:quartic}, there must exist $\phi_0(\k)$ such that $\{\phi_0(\k)\}$ forms a complete orthonormal basis for $P_0(\k)$ (though not necessarily degenerate anymore).
As an analogy, $\phi_0(\k),\lxyg\phi_0(\k)$ should be considered as \textbf{``pseudo-spins"} in particle space, while the remaining states are considered as ``pseudo-spins" in hole states.
Within the basis $\{\phi_0(\k)\}$, the restriction $H_0(\k)P_0(\k)$ must be of the form
\begin{equation}
    H_0(\k)P_0(\k) = 
    \begin{bmatrix}
      \hat{\ve}_0(\k) & \hat{\Delta}_0(\k) \\
      \hat{\Delta}_0(\k)^\dagger & -\hat{\ve}_0(-\k)^T
    \end{bmatrix}
\end{equation}
By $\sC\sP$ symmetry, $\hat{\Delta}_0(\k)$ must form a singlet between the pseudo-spins, i.e., in the basis representation $\{\phi_0(\k)\}$, the matrix $\hat{\Delta}_0(\k)=\Delta_0(\k) (i\sigma_2)$ where $\Delta_0(\k)\in \mathbb{C}$ and $\sigma$ denotes the Pauli matrix in pseduo-spin basis.
Furthermore, under the basis $\{\phi_0(\k)\}$, the symmetries $\lz,\lxyg$ have the matrix forms
\begin{equation}
    \label{eq:pseudospin-matrix}
    \lz = 
    \begin{bmatrix}
      +i &&& \\
      & -i && \\
      && -i & \\
      &&& +i
    \end{bmatrix}, \quad 
    \lxyg = 
    \begin{bmatrix}
       & +1 && \\
      +1 &  && \\
      &&  & +1 \\
      && +1 & 
    \end{bmatrix}
\end{equation}
Since $H_0(\k)P_0(\k)$ preserves the local symmetries $\lz,\lxyg$, explicit calculations show that $\hat{\ve}_0(\k) = \ve_0(\k) \sigma_0$ where $\ve_0(\k)\in \mathbb{R}$ and $\hat{\Delta}_0(\k)=0$. 
Therefore,
\begin{equation}
    H_0(\k)P_0(\k) = \tau_3 \otimes \ve_0(\k) \sigma_0
\end{equation}
Therefore, along the $(110)$ direction, the triplet-mixing Hamiltonian $H_0(\k)$ has a nodal point at $\k$ if and only if $\ve_0(\k) = 0$. 
Since this a real equation on a 1D line, it's in general satisfied at a single point node $\k_0 (k_z)$ near $\k^\text{s}(k_z)$ at each $k_z$ and thus extended line nodes are stable under triplet mixing as long as the symmetries $\lz,\lxyg$ are satisfied.

\subsubsection{Adding weak coupling between eigenspaces of $\lz =\pm i$}

Let us fix $k_z$ as before, and let $\k$ be near $\k_0(k_z)$, but not necessarily in the $(110)$ direction.

Since \sro{} is quasi-2D, the coupling between eigenspaces of $\lz =\pm i$ must be weak and thus we can assume that $H(\k) -H_0(\k_0)$ with respect to the gap of $H_0(\k_0)$ (lowest nonzero energy level of $H_0(\k_0)$ and not to be confused with the gap function $\Delta(\k_0)$) when $\k$ is near $\k_0$ so that $P(\k)$ can be well-defined as the projection operator onto eigenstates of $H(\k)$ with energies near zero and also unitarily equivalent to $P_0(\k_0)$ as explained in subsection \eqref{sec:Riesz}.
By perturbation theory (Weyl's inequality), the variation of the energy values must be small and thus it's sufficient to consider the restriction of $H_0(\k)$ to $P_0(\k)$ when attempting find the zero energies of $H_0(\k)$, i.e., only eigenstates in $P_0(\k)$ have energies sufficiently close to zero.

\textbf{I. First assume} that $\k$ is still along the $(110)$ direction so that $\lxyg$ is a local symmetry.
It should be noted from the matrix form of $\lxyg$ with respect to $\{\phi_0(\k_0)\}$ in Eq. \eqref{eq:pseudospin-matrix} that \begin{equation}
    \tr (\lxyg P_0(\k_0)) = 0
\end{equation}
Since $\lxyg$ has eigenvalue $\pm 1$, we see that $\tr (\lxyg P_0(\k_0))$ must be an integer and thus by continuity, for sufficiently small variations $H(\k)-H_0(\k_0)$, we must also have
\begin{equation}
    \tr (\lxyg P(\k)) = 0
\end{equation}
Since $H_0(\k)$ still preserves mirror symmetry $\lxyg$ in the diagonal planes and $\dim P(\k)=\dim P_0(\k_0)=4$, we see that eigenstates of $H(\k)$ in $P(\k)$ can be divided by the eigenspaces of $\lxyg=\pm 1$, each of which is of dim 2, i.e., $\dim \ker (\lxyg\pm 1) =2$.
Notice that $\sC\sP$ commutes with $\lxyg$ and thus maps within each eigenspace $\lxyg=\pm 1$.
Hence, by Lemma \eqref{lemma:double}, there exists an orthonormal basis $\chi_\pm, \sC\sP \chi_\pm$ consisting of eigenstates of $H(\k)$ within each eigenspace $\lxyg=\pm 1$, respectively.
Let us now define
\begin{equation}
    \phi(\k) = \frac{1}{\sqrt{2}} (\chi_+ +\chi_-)
\end{equation}
Then it's clear that $\{\phi(\k)\}$ forms a complete orthonormal basis for $P(\k)$. 
We can then repeat the same argument as before, restricting to the effective 2-band Hamiltonian $H(\k)P(\k)$, but only using mirror symmetry $\lxyg$ this time.
In this case, we see that $\hat{\Delta}(\k) =0$ while $\hat{\ve}(\k) = \ve(\k) \sigma_0 +\ve'(\k) \sigma_1$, or equivalently,
\begin{equation}
    H(\k)P(\k) = \tau_3 \otimes (\ve(\k) \sigma_0 +\ve'(\k) \sigma_1)
\end{equation}
Therefore, the perturbed Hamiltonian $H(\k)$ has a node if and only if $|\ve(\k)|=|\ve'(\k)|$. 
Since this we can either take  $\ve(\k)=+\ve'(\k)$ or $\ve(\k)=-\ve'(\k)$, the originally single node $\k_0$ for $H_0$ splits into 2 nodes along the $(110)$ direction for $H(\k)$ and thus suggests that there exists a Bogliubov Fermi surface circling the original node $\k_0$ and intersecting the diagonal plane at the 2 nodes.

\textbf{II. To check this, let us assume the general case} where $\k$ is not necessarily along the $(110)$ direction, but still near the original node $\k_0$.

A similar argument can be repeated to show that there exists a complete basis $\{\phi(\k),\chi'(\k),\sC\sP \phi(\k),\sC\sP \chi(\k)\}$ for $P(\k)$, where $\phi(\k),\chi(\k)$ are not related by mirror symmetry $\lxyg$ anymore.
However, since $\sC\sP$ is still preserved, the effective 2-band gap function $\hat{\Delta}(\k)$ must form a singlet between the pseudospins, i.e., $\hat{\Delta}(\k) =\Delta(\k)(i\sigma_2)$.
This implies that the particle-hole dispersion relation $\hat{\ve}(\k)$ can be unitarily transformed via any $SU(2)$ rotation in pseudospin space without affecting the gap function.
Therefore, the effective 2-band Hamiltonian $H(\k)P(\k)$ can be unitarily transformed so that $\hat{\ve}(\k)$ is diagonal with entries $\ve_\pm (\k)$, i.e.,
\begin{equation}
    H(\k)P(\k)  = 
    \left( \begin{array}{c|c}
      \begin{matrix}
          \ve_{+}(\k) &        \\
           & \ve_{-}(\k)
      \end{matrix} 
      &  
      \begin{matrix}
           & \Delta(\k)     \\
           -\Delta(\k) & 
      \end{matrix}  \\
      \cmidrule[0.4pt]{1-2}
      \begin{matrix}
           & -\Delta(\k)^*      \\
           \Delta(\k)^* & 
      \end{matrix} 
      & 
       \begin{matrix}
          -\ve_{+}(\k) &        \\
           & -\ve_{-}(\k)
      \end{matrix} 
    \end{array} \right)
\end{equation}
Where the solid lines divide the particle/hole subspaces.
Therefore, the effective Hamiltonian $H(\k)P(\k)$ has a zero mode if and only
\begin{equation}
    \label{eq:Bogoliubov-FS-appendix}
    |\Delta(\k)|^2 + \ve_+ (\k) \ve_-(\k) = 0
\end{equation}
In the absence of mirror symmetries and TRS, a small Zeeman splitting occurs in pseudospin space, i.e., $\ve_+(\k) \ne \ve_-(\k)$, and thus a Bogoliubov FS is generated where Eq. \eqref{eq:Bogoliubov-FS-appendix} is satisfied.

\subsubsection{Some useful exact statements regarding symmetries}
\begin{lemma}
    \label{lemma:double}
    Let $A$ be a Hermitian operator on a $2N$-dimensional Hilbert space and $\sT$ be an anti-unitary operator which anti-commutes with $A$ and satisfies $\sT=\sT^\dagger=\sT^{-1}$. Then there exists an orthonormal basis of the form $\phi_i, \sT \phi_i,i=1,2,...,N$ where $\phi_i$ corresponds to eigenvalue $a_i$ and $\sT \phi_i$ corresponds to eigenvalue $-a_i$. In particular, 
    \begin{equation}
        \dim \ker A \in 2\mathbb{N}
    \end{equation}
\end{lemma}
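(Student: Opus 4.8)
The plan is to analyze how $\sT$ permutes the eigenspaces of $A$ and to treat the nonzero eigenvalues and the kernel separately. First, note that the hypotheses give $\sT^2=1$ (directly from $\sT=\sT^{-1}$). Since $A$ is Hermitian its eigenvalues $a$ are real; writing $E_a$ for the eigenspace of $A$ with eigenvalue $a$, the anti-commutation $\sT A=-A\sT$ gives $A(\sT\phi)=-\sT A\phi=-a\,\sT\phi$ whenever $A\phi=a\phi$. Hence $\sT$ maps $E_a$ into $E_{-a}$; running the same argument with $a\to-a$ shows $\sT$ maps $E_{-a}$ into $E_a$, and since $\sT^2=1$ these are mutually inverse, so $\sT|_{E_a}\colon E_a\to E_{-a}$ is an anti-unitary bijection. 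In particular $\dim E_a=\dim E_{-a}$ for every $a$.

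From this the parity statement is immediate. The nonzero eigenvalues group into pairs $\{a,-a\}$ with $a>0$, each contributing $\dim E_a+\dim E_{-a}=2\dim E_a$ to the total, an even number. Since the whole space has even dimension $2N$, the remaining piece $\dim\ker A=\dim E_0$ is even as well. This is exactly where the hypothesis that the space is $2N$-dimensional is used, and it is all that is needed for $\dim\ker A\in 2\mathbb{N}$.

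To build the basis I would handle the blocks $E_a$ with $a>0$ first: choosing any orthonormal basis $\{\phi_i\}$ of $E_a$, the vectors $\{\sT\phi_i\}$ form an orthonormal basis of $E_{-a}$, since anti-unitarity gives $\langle\sT\phi_i,\sT\phi_j\rangle=\overline{\langle\phi_i,\phi_j\rangle}=\delta_{ij}$, and orthogonality between the two families is automatic because $E_a$ and $E_{-a}$ are distinct eigenspaces of the Hermitian $A$. Each such $\phi_i$ carries eigenvalue $a_i=a$ and $\sT\phi_i$ carries $-a_i$, as required.

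The main obstacle is the kernel $E_0$, where $\sT$ maps the space to itself and the naive pairing fails: if $\sT$ were ordinary complex conjugation, then $\sT$-fixed vectors $\phi$ satisfy $\sT\phi=\phi$, so $\{\phi,\sT\phi\}$ is not even linearly independent. The remedy exploits $\sT^2=1$: such an anti-unitary operator admits a real structure, i.e. an orthonormal basis $\{e_1,\dots,e_{2m}\}$ of $E_0$ (with $2m=\dim E_0$) of $\sT$-fixed vectors, obtained by decomposing $E_0=W_{\mathbb{R}}\oplus iW_{\mathbb{R}}$ with $W_{\mathbb{R}}=\{w:\sT w=w\}$ and observing that $\langle\sT u,\sT v\rangle=\overline{\langle u,v\rangle}$ makes the inner product real on $W_{\mathbb{R}}$. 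Pairing these as $\phi_j=\tfrac{1}{\sqrt{2}}(e_{2j-1}+i\,e_{2j})$ gives $\sT\phi_j=\tfrac{1}{\sqrt{2}}(e_{2j-1}-i\,e_{2j})$, and a one-line computation yields $\langle\phi_j,\sT\phi_j\rangle=0$ and $\langle\phi_j,\phi_j\rangle=1$, so $\{\phi_j,\sT\phi_j\}_{j=1}^{m}$ is an orthonormal basis of $E_0$ with all vectors in $\ker A$ (eigenvalue $a_j=0$). Assembling the $a>0$ families with this kernel basis produces the claimed orthonormal basis $\{\phi_i,\sT\phi_i\}_{i=1}^N$ of the whole $2N$-dimensional space. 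The only points worth verifying carefully are the existence of the $\sT$-fixed real basis and the reality of the inner product on $W_{\mathbb{R}}$, both of which follow from $\langle\sT u,\sT v\rangle=\overline{\langle u,v\rangle}$ together with $\sT^2=1$; this is precisely the step that distinguishes the present $\sT^2=+1$ case from the Kramers case.
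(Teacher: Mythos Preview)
Your proof is correct and follows essentially the same route as the paper: both treat the $a\ne 0$ eigenspaces by pairing $E_a$ with $E_{-a}$ via $\sT$, deduce $\dim\ker A$ is even from the parity of the total dimension, and then handle the kernel by first producing an orthonormal basis of $\sT$-fixed (``real'') vectors $e_j$ and taking the combinations $\tfrac{1}{\sqrt{2}}(e_{2j-1}\pm i\,e_{2j})$. The only cosmetic difference is that the paper reaches the $\sT$-fixed basis via the factorization $(\sT_A-I)(\sT_A+I)=0$ while you invoke the real-structure decomposition $E_0=W_{\mathbb{R}}\oplus iW_{\mathbb{R}}$; these are the same observation in different language.
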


\begin{proof}
    It's clear that if $\phi$ is an eigenstate of eigenvalue $a$, then $\sT \phi$ is an eigenstate of eigenvalue $-a$. 
    Therefore, if $a\ne 0$, then $\phi,\sT \phi$ must be orthogonal, and thus the summation of all eigenspaces with nonzero eigenvalue $a\ne 0$ must be of even dimension.
    Since the total Hilbert space is $2N$-dimensions, we see that $\dim \ker A \in 2\mathbb{N}$.
    
    Let us now consider the restriction $\sT_A$ of $\sT$ to $\ker A$.  Since $\sT_A^2 = I_A$ is the identity operator on $\ker A$, we see that $(\sT_A -I_A)(\sT+I_A) =0$. Hence, if $\phi\in \ker A$, then either $\tilde\phi \equiv (\sT_A+I_A)\phi$ is nonzero and thus an eigenstate of $\sT_A$ with eigenvalue $+1$, OR $\tilde\phi = 0$ and thus $\phi$ is an eigenstate of $\sT_A$ with eigenvalue $-1$. In the latter case, we see that $i\phi$ is an eigenstate of $\sT_A$ with eigenvalue $+1$ since $\sT_A$ is anti-linear. 
    Hence, we can always find an orthonormal basis $\chi_1,...,\chi_{2m}$ for $\ker A$ of eigenstates of $\sT_A$ with eigenvalue $+1$.
    
    Now let us define
    \begin{equation}
        \phi_{2i-1/2i} = \frac{1}{\sqrt{2}} (\chi_{2i-1} \pm i\chi_{2i}), \quad i=1,2,...,m
    \end{equation}
    Then it's clear that $\phi_{2i} = \sT \phi_{2i-1}$ and thus we have found a basis of the form $\phi, \sT\phi$ for $\ker A$ and consequently for the entire Hilbert space.
\end{proof}
\begin{theorem}
\label{theorem:quartic}
If BCS Hamiltonian $h$ commutes with $\lz,\lxyg$, then
\begin{equation}
    \dim \ker h = 4m,  \quad m \ge 0
\end{equation}
More specifically, there exists an orthonormal basis for $\ker h$ of the form 
\begin{equation}
    \{\phi\} \equiv  (\phi_i, \lxyg\phi_i, \sC\sP \phi, \sC\sP \lxyg\phi_i:i=1,...,m)
\end{equation}
\end{theorem}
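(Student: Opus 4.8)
The plan is to fix the wave vector and study the finite-dimensional matrix $h$ through the algebra of the operators that (anti)commute with it: the unitaries $\lz$ and $\lxyg$, which commute with $h$ by hypothesis and obey $\lz^{2}=-1$, $\lxyg^{2}=+1$, $\{\lz,\lxyg\}=0$, and $[\sC\sP,\lz]=[\sC\sP,\lxyg]=0$ (the anticommutation $\{\lz,\lxyg\}=0$ arising because $\sM_z$ and $\sM_{xy}$ rotate the spin-$\tfrac12$ sector about orthogonal axes, which also fixes the explicit forms in \eqref{eq:pseudospin-matrix}), together with the $\k$-conserving anti-unitary $\sC\sP$, which anticommutes with $h$ since $h$ is a parity-even BdG Hamiltonian. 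The one nontrivial piece of input — to be extracted from a sign chase — is that the composite anti-unitary $\Theta\equiv\sC\sP\,\lxyg$, once the factors from $\sM_z^{2}=\sM_{xy}^{2}=-1$, from the particle–hole grading ($\sC\tau_3\sC^{-1}=-\tau_3$ although $\sC^{2}=+1$), and from the gauge factor $\sU=i\tau_3$ built into $\lxyg$ are all tracked, satisfies $\Theta^{2}=-1$.

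Granting this, the statement follows from two nested Kramers pairings. Since $[\lz,h]=0$ and $\lz^{2}=-1$, decompose $\ker h=K_{+}\oplus K_{-}$ into the $\lz=\pm i$ eigenspaces; this is one factor of two, because $\sC\sP$ anticommutes with $h$ (so preserves $\ker h$), commutes with $\lz$ as an operator, and is anti-linear, hence carries $K_{+}$ isomorphically onto $K_{-}$ and $\dim K_{+}=\dim K_{-}$. For the second factor, note that $\{\lz,\lxyg\}=0$ forces $\lxyg$ also to carry $K_{+}$ onto $K_{-}$, so $\Theta=\sC\sP\,\lxyg$ maps $K_{+}$ back into itself; $\Theta$ is anti-unitary, preserves $\ker h$, and has $\Theta^{2}=-1$, so by Kramers's theorem (the $\sT^{2}=-1$ counterpart of Lemma \ref{lemma:double}) $K_{+}$ has an orthonormal basis of pairs $\{\phi_{i},\Theta\phi_{i}\}_{i=1}^{m}$, with $\langle\phi_{i},\Theta\phi_{i}\rangle=0$ automatic from anti-unitarity and $\Theta^{2}=-1$; in particular $\dim K_{+}=2m$ and $\dim\ker h=4m$. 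To put the basis in the stated form, write $\Theta\phi_{i}=\sC\sP\,\lxyg\phi_{i}$ and apply the unitary $\lxyg$ (which maps $K_{+}$ onto $K_{-}$) to each pair: $\lxyg\phi_{i}\in K_{-}$ and $\lxyg\,\Theta\phi_{i}=\lxyg\,\sC\sP\,\lxyg\phi_{i}=\sC\sP\,\lxyg^{2}\phi_{i}=\sC\sP\,\phi_{i}\in K_{-}$, using $[\sC\sP,\lxyg]=0$ and $\lxyg^{2}=1$, and these form an orthonormal basis of $K_{-}$. The union $\{\phi_{i},\,\lxyg\phi_{i},\,\sC\sP\phi_{i},\,\sC\sP\lxyg\phi_{i}\}_{i=1}^{m}$ is thus an orthonormal basis of $\ker h$; orthogonality between the $K_{+}$ and $K_{-}$ vectors is automatic since these are $\lz$-eigenspaces for distinct eigenvalues.

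I expect the only real obstacle to be establishing $\Theta^{2}=-1$, i.e.\ pinning down every $\pm$ sign in the algebra generated by $\sC\sP$, $\lz$, and $\lxyg$: one must chase the factors of $-1$ through the spin-$\tfrac12$ representations of the two mirror planes, the Nambu doubling, and, crucially, the gauge transformation $\sU=i\tau_3$ that makes $\tilde\sM_{xy}$ a symmetry of $H$ in the time-reversal-breaking $d+ig$ state, and then check that the resulting anti-unitary is compatible with $\lz$ and $\lxyg$ \emph{simultaneously} — it is this joint compatibility that promotes a generic twofold degeneracy (which already follows from a single such pairing, cf.\ the main-text footnote) to a fourfold one. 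Everything downstream — the two Kramers pairings, the orthogonality bookkeeping, and the inductive construction of the basis — is a routine adaptation of the proof of Lemma \ref{lemma:double}.
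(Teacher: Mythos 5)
Your overall architecture is the same as the paper's (decompose by $\lz$, note that $\Theta\equiv\sC\sP\,\lxyg$ preserves each $\lz=\pm i$ eigenspace and anticommutes with $h$, get a doubling inside one eigenspace, then transfer to the other eigenspace with $\lxyg$ or $\sC\sP$), but the step you yourself identify as the crux — establishing $\Theta^2=-1$ — is both left undone and, under the paper's conventions, false. Chasing the signs you list: $\sC^2=+1$, $\sP^2=+1$ and $[\sC,\sP]=0$ give $(\sC\sP)^2=+1$; the bare mirror satisfies $\lxy^2=-1$, but the gauge factor $\sU=i\tau_3$ also squares to $-1$ and commutes with $\lxy$, so $\lxyg^2=+1$; and $\sC\sP$ commutes with $\lxyg$ precisely because the minus sign from $\sC\tau_3\sC^{-1}=-\tau_3$ cancels against complex conjugation of the $i$ in $\sU$. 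Hence $\Theta^2=(\sC\sP)^2\,\lxyg^2=+1$, which is exactly what the paper asserts when it calls $\sC\sP\lxyg$ an anti-unitary \emph{Hermitian} operator ($\Theta=\Theta^{-1}$), and what is needed for its Lemma \ref{lemma:double}; it is also consistent with the explicit matrices in Eq.~\eqref{eq:pseudospin-matrix}, where $\Theta$ maps $\phi_0\mapsto\sC\sP\lxyg\phi_0\mapsto\phi_0$.

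This is not a cosmetic sign: your second factor of two rests entirely on Kramers' theorem, which needs $\Theta^2=-1$ to guarantee $\langle\phi,\Theta\phi\rangle=0$ and an even-dimensional, $\Theta$-paired kernel inside $K_+$. With $\Theta^2=+1$ none of that is automatic — an anti-unitary squaring to $+1$ can have fixed vectors, so a one-dimensional $\Theta$-invariant kernel in $K_+$ is not excluded by your argument. The paper closes this hole differently (Lemma \ref{lemma:double}): since $\dim K_\pm=2n$ is even and $\Theta$ pairs the nonzero eigenvalues of $h|_{K_+}$ as $E\leftrightarrow-E$, the kernel in $K_+$ is forced to have even dimension, and the basis of the form $\phi_i,\Theta\phi_i$ is then constructed by hand from the real structure defined by $\Theta^2=+1$ (take $\Theta$-fixed vectors $\chi_{2i-1},\chi_{2i}$ and set $\phi=(\chi_{2i-1}\pm i\chi_{2i})/\sqrt2$). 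Your final assembly step — transporting the pairs to $K_-$ with $\lxyg$ and using $[\sC\sP,\lxyg]=0$, $\lxyg^2=1$ — is fine and matches the paper, but to make the proposal correct you must replace the Kramers mechanism by this Lemma-\ref{lemma:double}-type counting (or genuinely establish a $-1$ square, which the stated conventions do not allow).
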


\begin{proof}
    Since $h$ commutes with $\lz$ and $\lz$ is Hermitian with $\lz^2=-1$, we can decompose the particle-hole subspace into the eigenspaces of $\lz=\pm i$. 
    Notice that $\sC\sP$ commutes with $\lz$ and since the eigenvalue of $\lz$ are imaginary, we see that $\sC\sP$ is an anti-unitary operator between the eigenspaces $\lz = +i\leftrightarrow \lz=-i$.
    In particular, we see that $\dim \ker (\lz \pm i) = 2n$
    
    Since $h$ anti-commutes with $\sC\sP$, we see that if $\phi$ is an eigenstate of $h$ in eigenspace $\lz=+i$ with energy $E$, then $\sC\sP \phi$ is an orthogonal eigenstate of $h$ in  eigenspace $\lz=-i$ with energy $-E$, and vice-versa.
    Similarly, notice that $\lxyg$ anti-commutes with $\lz$ and thus is a unitary operator which maps between the eigenspaces $\lz = +i\leftrightarrow \lz=-i$.
    Therefore, $\sC\sP \lxyg$ is an anti-unitary map which maps each eigenspace onto itself, i.e., $\ker (\lz \pm i)$ is invariant under $\sC\sP \lxyg$.
    Notice that $\sC\sP \lxyg$ is an anti-unitary Hermitian operator which anti-commutes with $h$ within each eigenspace $\ker (\lz \pm i)$.
    Therefore, by Lemma \eqref{lemma:double}, within each eigenspace $\ker (\lz \pm i)$, the kernel of $h$ is doubly degenerate.
    In fact, we can find an orthonormal basis of the form $\phi_i, \sC\sP\lxyg \phi_i,i=1,...,m$ for $\ker h\cap \ker (\sM_z -i)$ (notice that the basis may be empty, i.e., $m=0$, if $\ker h$ is trivial, i.e., only contains the zero eigenstate).
    And by $\sC\sP$ symmetry, we see that the kernel of $h$ is quartic-degenerate.
    More specifically, $\phi_i,\lxyg\phi_i,\sC\sP\phi_i, \sC\sP \lxyg\phi_i$ with $i=1,...,m$ form an orthonormal basis for $\ker h$.
\end{proof}

\subsubsection{The Riesz Projector and perturbation theory}
\label{sec:Riesz}

For this section, let $h_0$ be a Hermitian operator on a finite $N$-dimensional Hilbert space and let $h_\lambda = h_0 +\lambda V$ denote the perturbed Hermitian operator parametrized by the small parameter $\lambda$ and the perturbation is denoted by $V$.
To have a controlled perturbation theory, we require the eigenspaces of $h_0$ to ``evolve smoothly" to those of $h_\lambda$ as we tune the small parameter $\lambda$. 
One way of defining such a smooth evolution rigorously is to apply the Riesz projector \cite{kato2013perturbation,hislop2012introduction}, which we shall define shortly.

For simplicity, let $E^1(h_\lambda)\le E^2(h_\lambda) \le \cdots \le E^N(h_\lambda)$ denote the eigen-energies of $h_\lambda$ in ascending order (where each energy level is repeated by their degeneracy), and let $E_0$ denote a (possibly degenerate) eigen-energy of $h_0$, i.e., there exists some $i$ such that
\begin{equation}
    E^{i-1}(h_0) < E_0 \equiv E^i(h_0) =E^{i+1} (h_0) \cdots =E^{i+m-1} (h_0) < E^{i+m}(h_0)
\end{equation}
It's then clear that we can draw a circle $\Gamma$ in the complex plane $\mathbb{C}$ which only contains $E_0$ in the interior and none of the remaining energy levels of $H_0$.
By the Weyl-inequality, we know that $|E^n(h_\lambda) -E^n(h_0)| \le |\lambda| ||V||$, i.e., the variation in energy levels is controlled by the small parameter $\lambda$. 
Therefore, for sufficiently small $\lambda$, the energies $E^i(h_\lambda),...,E^{i+m-1}(h_\lambda)$ are still contained in the interior of $\Gamma$ while the remaining energy levels of $h_\lambda$ are in the exterior.
Within this neighborhood of $\lambda$ values, we can define the \textbf{Riesz projector} as
\begin{equation}
    P_\lambda \equiv  \frac{1}{2\pi i} \oint_\Gamma \frac{dz}{z-h_\lambda}
\end{equation}
It's then easy to see that $P_\lambda$ is smooth with respect to $\lambda$ and is equal to the projection operator onto the summation of eigenspaces $E^i(h_\lambda),...,E^{i+m-1}(h_\lambda)$.
Since the number of energy levels (repeated by their degeneracy) does not change in the interior of $\Gamma$ for sufficiently small $\lambda$, we see that $\dim P_\lambda$ is constant and $P_\lambda, P_{\lambda'}$ are unitarily equivalent for distinct values of small $\lambda,\lambda'$.

\subsection{Decomposition of the interaction}
\label{sec:interaction}
Consider the interaction $\sV$ in Eq. \eqref{eq:interaction} given by the form $V(\k-\k')$ where we have suppressed the band indices $\mu,\mu'$ for simplicity.
Since $V(\q)$ is assumed to be $D_4$-invariant, we can rewrite it in terms of $s$-wave lattice harmonics with range $R \ge 0$, 
\begin{equation}
    V(\q) = \sum_{R\ge0} v(R) \hs_R(\q)
\end{equation}
Let $|\rr\ket$ denote the Dirac delta function at real space lattice site $\rr$ and $W_R$ denote the $D_4$-invariant subspace spanned by $|\rr\ket$ over all lattice sites $\rr$ of range $R$, i.e., $|\rr|=R$. 
Then in real space, the lattice harmonic $s_R$ can be written as the summation over all lattice sites $\rr$ of range $R$, i.e.
\begin{align}
    \hs_R &= \frac{1}{\sqrt{|W_R|}} \sum_{|\rr|=R} |\rr\ket \\
    \hs_R(\rr) &= \frac{1}{\sqrt{|W_R|}} \delta_{|\rr|=R}
\end{align}
Where $|W_R|$ is the number of lattice sites $\rr$ with range $R$.
Conversely, in momentum $k$-space, the lattice harmonic can be rewritten as
\begin{align}
    s_R(\k-\k')&=  \frac{1}{\sqrt{|W_R|}} \sum_{|\rr|=R} \bra \k -\k'|\rr\ket \\
    &= \frac{1}{\sqrt{|W_R|}} \sum_{|\rr|=R} \bra \k |\rr\ket \bra \rr|\k'\ket \\
    &= \frac{1}{\sqrt{|W_R|}} \bra \k| \left[\sum_{|\rr|=R}|\rr\ket\bra \rr| \right] |\k'\ket\\
    &= \frac{1}{\sqrt{|W_R|}} \bra \k| P_R |\k'\ket
\end{align}
Where $P_R$ denotes the projection operator onto $W_R$. 
Notice that for a given range $R$, the subspace $W_R$ can be decomposed into a unique combination of irreps of the symmetry group $D_4$, and thus $P_R$ must be diagonal in terms of its decomposition, i.e.,
\begin{equation}
    P_R = \frac{1}{\sqrt{W_R}} \sum_{\psi_R} |\psi_R \ket \bra \psi_R| 
\end{equation}
Where the summation is over all possible lattice harmonics $\psi_R$ in $W_R$ corresponding to distinct irreps.
In particular,
\begin{equation}
    s_R(\k-\k') = \bra \k|P_R|\k'\ket =\frac{1}{\sqrt{W_R}} \sum_{\psi_R}\psi_R(\k) \psi_R(\k')^*
\end{equation}

As an example, consider the case $R=1$ in Table \ref{tab:harmonics}. Since only $W_1 = A_1 \oplus B_1 \oplus E$, we see that
\begin{equation}
    \hs_1(\k-\k') = \frac{1}{2} \left( \hs_1(\k)\hs_1(\k') +\hd_1(\k)\hd_1(\k')\right) +\frac{1}{2} \left( p_x(\k)p_x(\k') +p_y(\k) p_y(\k')\right)
\end{equation}
Where $p_x(\k),p_y(\k)$ is an orthonormal basis for the $E$ irrep in $W_1$. 
Notice that the terms projecting into the $E$ irrep are not involved in computing the nonlinear gap equation in Eq. \eqref{eq:nonlinear-gap} since we assumed that the gap function $\Delta$ is of even parity.
Therefore, in the main text, lattice harmonics belonging to the $E$ irrep for any range $R$ are omitted.
\subsection{Diagonalization of 2-band BCS Hamiltonian: $SU(4)\to SO(6)$}
\label{sec:su4-so6}
\subsubsection{Setup}
Ignoring spin-orbit coupling, $\soc(\k) =0$, the 2-band system involving the $\alpha,\beta$-bands decouples from the single $\gamma$ band in the BCS Hamiltonian.
Therefore, we can use the reduced 2-band Nambu spinor (in contrast to the 3-band spinor in Eq. \eqref{eq:BCS-block}),
\begin{equation}
    \label{eq:reduced-Nambu}
    \Psi^\dagger (\k) = 
    \left[\begin{array}{*2c|*2c}
    \psi_{x\up}^\dagger(\k) &  \psi_{y\up}^\dagger(\k) & -\psi_{x\down}(-\k) & -\psi_{y\down}(-\k)
    \end{array}\right]
\end{equation}
To rewrite the decoupled 2-band BCS Hamiltonian.
\begin{equation}
    H(\k) = 
    \left(
    \begin{array}{*2c|*2c}
    \ve_x     & \hy        &  \Delta_x   & \Delta_h     \\
    \hy       & \ve_y      &  \Delta_h   & \Delta_y     \\[0.5ex] \hline
    \Delta_x^\dagger  & \Delta_h^\dagger  &  -\ve_x     & -\hy       \\
    \Delta_h^\dagger  & \Delta_y^\dagger  &  -\hy       &  -\ve_y    \\
    \end{array}\right)
\end{equation}
Where we have kept the $\k$-dependency implicit in the entries since the BCS Hamiltonian $H(\k)$ is local in $\k$, and the solid lines separate the particle-hole subspaces. 
Due to particle-hole symmetry and even parity, the eigen-energies of the local $H(\k)$ is doubly degenerate, i.e., of the form $\pm E$. 
To reduce this degeneracy and make the diagonalization process more transparent, let us consider $H(\k)^2$, i.e.,
\begin{align}
    H(\k)^2 &= 
    \left(
    \begin{array}{*2c|*2c}
    R+h        & a+ib        &  0       & c+id   \\[0.5ex]
    a-ib       & R-h         &  -c-id   & 0      \\[0.5ex] \hline
    0          & -c+id       &  R+h     & a-ib   \\[0.5ex]
    c-id       & 0           &  a+ib    & R-h    \\[0.5ex]
    \end{array}\right) \\
    &= R  +h\sigma_{03} +a\sigma_{01}-b\sigma_{32} -c\sigma_{22} -d\sigma_{12}
     \label{eq:H-square}
\end{align}
Where $\sigma_{\mu\nu} =\sigma_\mu \otimes \sigma_\nu$ is the tensor product of Pauli matrices and
\begin{align}
    R &= \hy^2 +|\Delta_h|^2 +\frac12 \left(\ve_x^2 +|\Delta_x|^2 +\ve_y^2  +|\Delta_y|^2\right) \\
    h &= \frac12 \left( \ve_x^2 +|\Delta_x|^2 -\ve_y^2 -|\Delta_y|^2\right) \\
    a +ib &= \Delta_h^\dagger \Delta_x +\Delta_h \Delta_y^\dagger + \ve_h(\ve_x+\ve_y) \\
    c +id &= \Delta_h(\ve_x-\ve_y) -(\Delta_x-\Delta_y)\hy
\end{align}
Indeed, to solve the nonlinear gap equation, we need to obtain the 1-particle density matrix (1-pdm) $\Gamma(\k)$ \cite{bach1994generalized} defined by
\begin{align}
    \label{eq:pdm}
    \Gamma(\k) &\equiv \frac{1}{e^{\beta H(\k)}+1} \\
               &= \frac{1}{2} - H(\k) U(\k)^\dagger \frac{1}{2|\sE(\k)|} \tanh \left( \frac{\beta |\sE(\k)|}{2} \right) U(\k)
\end{align}
Where $U(\k)$ is the unitary matrix which diagonalizes $H(\k)^2$ into $\sE(k)^2$. 
Notice that the 1-pdm $\Gamma(k)$ is related to the Nambu spinors in Eq. \eqref{eq:reduced-Nambu} in the sense that the matrix entries of $\Gamma(\k)$ are given by
\begin{equation}
    \Gamma(\k)_{ab} = \bra \Psi^\dagger(\k)_b \Psi(\k)_a\ket
\end{equation}
\subsubsection{A reformulation of $SU(2)\to SO(3)$}
The single-band BCS Hamiltonian is generally diagonalized via the $SU(2)\to SO(3)$ relation (or $\mathfrak{su}(2) \leftrightarrow \mathfrak{so}(3)$ isomorphism) and thus it is natural to think the corresponding $SU(4)\to SO(6)$ homomorphism ($\mathfrak{su}(4) \leftrightarrow \mathfrak{so}(6)$) can be used to diagonalize the 2-band BCS Hamiltonian.
With that in mind, let us reformulate the $SU(2)\to SO(3)$ relation so that it can be easily generalized to the $SU(4) \to SO(6)$ isomorphism.

Let $\hat{a}\wedge \hat{b}$ be the anti-symmetric matrix with entry $-1$ at $(a,b)$ and $+1$ as $(b,a)$. 
In the example of $d=3$ dimensions, 
\begin{align}
    \hat{1} \wedge \hat{2} &= -i L_3 \\
    \exp( \theta  \hat{1} \wedge \hat{2}) &= \exp (-i\theta L_3)
\end{align}
Where $L_3$ is the standard angular momentum along the $+\hat{3}$ direction.
Notice that in $d=3$ dimension, a rotation in a plan  (e.g., $e^{-i\theta L_3}$) can be specified by the normal vector $\hat{3}$ and a rotation angle $\theta$.
However, in higher dimensions, this is impossible and thus we instead specifiy a rotation by the plane, e.g., $\hat{1}\wedge \hat{2}$ (the plane spanned by the orthonormal frame $\hat{1},\hat{2}$), and the rotation angle $\theta$. Notice that $\hat{1}\wedge{2} = -\hat{2}\wedge \hat{1}$, and thus the order of the 2 axes represent the direction of rotation, e.g., $\exp(\theta \hat{1}\wedge \hat{2})$ rotates the axes $\hat{1} \to \hat{2}$ by an angle of $\theta$.

It's then clear that the $SU(2)\to SO(3)$ relation ($\mathfrak{su}(2)\leftrightarrow \mathfrak{so}(3)$ isomorphism) is given by
\begin{align}
    \frac{1}{2i} \sigma_1 \leftrightarrow \hat{2}\wedge \hat{3} \\
    \frac{1}{2i} \sigma_2 \leftrightarrow \hat{3}\wedge \hat{1} \\
    \frac{1}{2i} \sigma_3 \leftrightarrow \hat{1}\wedge \hat{2}
\end{align}
Where $\sigma_i$ are the Pauli matrices. In particular, a $2\times 2$ Hermitian matrix in the Lie algebra $i\mathfrak{su}(2)$
\begin{equation}
    \label{eq:SU2-example}
    H = \cos \theta \sigma_3 + \sin \theta \sigma_1 \leftrightarrow \cos \theta \hat{1}\wedge \hat{2} + \sin \theta \hat{2}\wedge \hat{3}
\end{equation}
Can be diagonalized $H = U^\dagger \sigma_3 U$ via the unitary operator
\begin{equation}
    U = \exp \left(-\theta \frac{\sigma_2}{2i} \right) \leftrightarrow \exp\left(\theta \hat{3}\wedge\hat{1} \right)
\end{equation}
It should be emphasized that the only requirement for such a diagonalization procedure is that $\hat{1},\hat{2},\hat{3}$ form an orthonormal frame.
In higher dimensions, such an orthonormal frame may be arbitrarily chosen, and may not correspond to the original standard axes.
\subsubsection{The $SU(4)\to SO(6)$ Relation and its Application to the 2-band System}
\begin{table}[h!]
\centering
\begin{equation}
\begin{array}{c||c|ccc}
  & 0                     & 1                                    & 2                              & 3\\\hline\hline
0 &                       & \tm{l1}\hat{2}\wedge\hat{3}\tm{r1}   & \hat{3}\wedge\hat{1}           & \tm{l2}\hat{1}\wedge \hat{2}\tm{r2} \\\hline
1 & \hat{2}'\wedge\hat{3}'& \hat{1}'\wedge\hat{1}                & \tm{l3}\hat{1}'\wedge\hat{2}   & \hat{1}'\wedge\hat{3}\\
2 & \hat{3}'\wedge\hat{1}'& \hat{2}'\wedge\hat{1}                & \hat{2}'\wedge\hat{2}          & \hat{2}'\wedge\hat{3}\\
3 & \hat{1}'\wedge\hat{2}'& \hat{3}'\wedge\hat{1}                & \hat{3}'\wedge\hat{2}\tm{r3}   & \hat{3}'\wedge\hat{3}
\end{array}
\DrawBox[thick, red ]{l1}{r1}{0.2}{0.2}
\DrawBox[thick, red ]{l2}{r2}{0.2}{0.2}
\DrawBox[thick, red ]{l3}{r3}{0.2}{0.2}
\end{equation}
\label{tab:SU4}
\caption{$SU(4)\to SO(6)$ Relation. $\hat{1},\hat{2},\hat{3},\hat{1}',\hat{2}',\hat{3}'$ denote the 6-axes generating $\mathbb{R}^6$ so that $\hat{a}\wedge\hat{b}$ generates the Lie algebra $\mathfrak{so}(6)$. The entry $\hat{a}\wedge\hat{b}$ in row $\mu$ and column $\nu$ corresponds to $\sigma_{\mu\nu}/2i$ where $\sigma_{\mu\nu}=\sigma_\mu\otimes \sigma_\nu$ is the tensor product of Pauli matrices, e.g., $\sigma_{13}/2i \leftrightarrow \hat{1}'\wedge\hat{3}$. The red box show the only nonzero entries in $H(\k)^2$ in Eq. \eqref{eq:H-square}.}
\end{table}

Using the framework established in the previous section, we can easily generalize the isomorphism to $SU(4) \to SO(6)$, tabulated in Table. \ref{tab:SU4}. 
Indeed, let $\hat{1},\hat{2},\hat{3},\hat{1}',\hat{2}',\hat{3}'$ denote the 6-axes generating $\mathbb{R}^6$ so that $\hat{a}\wedge\hat{b}$ generates the Lie algebra $\mathfrak{so}(6)$.
Let $\sigma_{\mu\nu} =\sigma_\mu \otimes \sigma_\nu$ denote the tensor product of Pauli matrices (all except $\sigma_{00}$) which generates the Lie algebra $i\mathfrak{su}(4)$.
Then Table. \ref{tab:SU4} is read in the following manner: the entry $\hat{a}\wedge\hat{b}$ in row $\mu$ and column $\nu$ corresponds to $\sigma_{\mu\nu}/2i$, e.g., $\sigma_{13}/2i \leftrightarrow \hat{1}'\wedge\hat{3}$.

To apply this isomorphism in diagonalizing the 2-band system, notice that the red lines in Table. \ref{tab:SU4} boxes the only entries occurring in $H(\k)^2$ as given in Eq. \eqref{eq:H-square}. 
Therefore, we can map $H(\k)^2$ into the Lie algebra $\mathfrak{so}(6)$ so that
\begin{align}
    H(\k)^2 &\leftrightarrow h \hat{1}\wedge \hat{2} + g \hat{2}\wedge \hat{n} \\
    &\leftrightarrow r \left(\cos \theta \hat{1}\wedge \hat{2} + \sin \theta \hat{2}\wedge \hat{n} \right)
\end{align}
Where we have ignored that constant $R$ term since is commutes with any unitary operator and
\begin{align}
    g\hat{n} = a\hat{3} +b\hat{3}'+c\hat{2}'+d\hat{1}', \quad r^2 =h^2+g^2
\end{align}
It's then clear that $\hat{1},\hat{2},\hat{n}$ form an orthonormal frame in $\mathbb{R}^6$, and thus using the example given in Eq. \eqref{eq:SU2-example}, it's intuitive to see that $H(\k)^2$ is diagonalized via
\begin{align}
    U(\k) &\leftrightarrow \exp \left( \theta \hat{n}\wedge \hat{1} \right) \\
     &= \exp \left(i \frac{\theta}{2g} (a\sigma_{02} +b\sigma_{31} +c\sigma_{21} +d\sigma_{11}) \right) \\
     &= \cos \left(\frac{\theta}{2}\right) +i \sin \left(\frac{\theta}{2}\right) \frac{1}{g} (a\sigma_{02} +b\sigma_{31} +c\sigma_{21} +d\sigma_{11})
\end{align}
So that
\begin{equation}
    H(\k)^2 = U(\k)^\dagger \sE(\k)^2 U(\k), \quad \sE(\k)^2 = R +r\sigma_{03}
\end{equation}
In particular, there exists $F(\k),f(\k)$ given by
\begin{align}
    F(\k) &= \sum_{\sigma =\pm} \frac{1}{4\sqrt{R(\k)+\sigma r(\k)}} \tanh\left( \frac{\beta}{2} \sqrt{R(\k)+ \sigma r(\k)}\right) \\
    f(\k) &= \sum_{\sigma =\pm} \frac{\sigma}{4\sqrt{R(\k)+\sigma r(\k)}} \tanh\left( \frac{\beta}{2} \sqrt{R(\k)+ \sigma r(\k)} \right)
\end{align}
Such that
\begin{align}
    \frac{1}{2|\sE(\k)|} \tanh \left( \frac{\beta |\sE(\k)|}{2} \right) &= F(\k) +f(\k) \sigma_{03} \\
    U(\k)^\dagger \frac{1}{2|\sE(\k)|} \tanh \left( \frac{\beta |\sE(\k)|}{2} \right) U(\k) &=
    F(\k)  + \frac{f(\k)}{r(\k)}
    \left(
    \begin{array}{*2c|*2c}
      h        & a+ib        &  0       & c+id   \\[0.5ex]
    a-ib       & -h          &  -c-id    & 0      \\[0.5ex] \hline
    0          & -c+id       &  +h      & a-ib   \\[0.5ex]
    c-id       & 0           &  a+ib    & -h    \\[0.5ex]
    \end{array}\right) \\
    &= F(\k) + \frac{f(\k)}{r(\k)} (H(\k)^2 -R(\k))
\end{align}
It's then easy to obtain the 1-pdm $\Gamma(\k)$ in Eq. \eqref{eq:pdm}.
\subsection{Simulating strain and stress}
\label{sec:strain-details}
In this section, we describe the detailed parameters used to simulate pure $A_1$ and $B_2$ strain in the main text. For simplicity, let us define
\begin{align}
    \ve_0 &\equiv \frac{1}{2} (\ve_x +\ve_y) \\
    \ve_3 &\equiv \frac{1}{2} (\ve_x -\ve_y) 
\end{align}
Where $\ve_0,\ve_3$ should be regarded as the coefficients of decomposing the $x,y$ band structure into Pauli matrices, i.e.,
\begin{equation}
    \left(\begin{array}{cc}
    \ve_x & \ve_h \\
    \ve_h & \ve_y
    \end{array}\right) 
    = \ve_0 \sigma_0 +\ve_h \sigma_1 +\ve_3\sigma_3
\end{equation}
In the absence of strain, $\ve_x,\ve_y$ are given by Eq. \eqref{eq:band-structure-details} and thus we rewritten the equations as follows
\begin{align}
    \label{eq:band-structure-Pauli}
    \ve_0(\k)  &= -\mu_0 -2t_0(\cos k_x +\cos k_y) +4t_0' \sin k_x \sin k_y,\quad t_0 = (t_x+t_y)/2 \\
    \ve_3 (\k) &= - 2t_3 (\cos k_x - \cos k_y)-2t_3' (\cos k_x -\cos k_y),\quad t_3 = (t_x-t_y)/2 \\
    \ve_z (\k) &= -\mu_z -2t_z (\cos k_x +\cos k_y) -2t_z'(2 \cos k_x \cos k_y) \\
    \hy (\k)   &= 4 t_h \sin k_x \sin k_y
\end{align}
To simulate different types of strain, let us introduce the following notation
\begin{align}
    \label{eq:band-structure-strain}
    \ve_0(\k)  &= -\mu_{0,A_1} -2t_{0,A_1}(\cos k_x +\cos k_y) -2t_{0,B_1} (\cos k_x -\cos k_y) +4t_{0,B_2} \sin k_x \sin k_y \\
    \ve_3 (\k) &= - 2t_{3,A_1} (\cos k_x - \cos k_y)-2t_{3,B_1} (\cos k_x +\cos k_y) \\
    \ve_z (\k) &= -\mu_{z,A_1} -2t_{z,A_1} (\cos k_x +\cos k_y) -2t_{z,A_1}'(2 \cos k_x \cos k_y) \\
               &\quad -2t_{z,B_1}(\cos k_x -\cos k_y) +4t_{z,B_2} \sin k_x \sin k_y \\
    \hy (\k)   &= 4 t_{h,A_1} \sin k_x \sin k_y 
\end{align}
To simulate pure $A_1$ strain $\ve_{A_1}$, we have
\begin{align}
    t_{s,A_1} (\ve_{A_1})&= t_s (1-\alpha_{t,A_1} \ve_{A_1}), \quad s=z,z',0,3,h \\
    \mu_{z,A_1}(\ve_{A_1}) &= \mu_0 (1-\alpha_{\mu,A_1} \ve_{A_1})
\end{align}
Where $\mu_{0,A_1}$ is modified so that the total density of electrons $\bra n\ket$ is kept fixed for nonzero values of $\ve_{A_1}$, and we have taken $\alpha_{t,A_1} = 3$ and $\alpha_{\mu,A_1}=8$ so that we mainly modify the chemical potentials when simulating pure $A_1$ strain. Similarly, to simulate pure $B_1$ and $B_2$ strain, we have
\begin{align}
    t_{s,B_1}(\ve_{B_1}) &= -t_s\alpha_{B_1} \ve_{B_1}, \quad s=z,0,3 \\
    t_{s,B_2}(\ve_{B_2}) &= -t_3 \alpha_{B_2} \ve_{B_2}, \quad s=z,0
\end{align}
Where we have taken $\alpha_{B_1}=\alpha_{B_2}=12$.

It should be noted that in order to simulate uniaxial stress, we used the Poisson ratio $\nu\approx 0.3935$ for \sro{} \cite{paglione2002elastic} so that both $A_1$ strain and $B_1$ strain are present, i.e.,
\begin{align}
    \ve_{yy} &= -\nu \ve_{xx} \\
    \ve_{A_1} &= \frac{1}{2}(\ve_{xx} +\ve_{yy}) \\
    \ve_{B_1} &= \frac{1}{2}(\ve_{xx} -\ve_{yy})
\end{align}
The magnitudes $\alpha_{t,A_1},\alpha_{\mu,A_1},\alpha_{B_1}$ were tuned \footnote{Notice that the magnitudes $\alpha$ are slightly different from the SI of reference \cite{li2022elastocaloric}. Indeed, the reference only considered the single $\gamma$ band case and did not modify the chemical potential $\mu_z$, which we believe to be the main source of $A_1$ strain since it first appears as an onsite hopping term.} so that the $\gamma$ band in \sro{} crosses the Van Hove point at $\ve_{xx}\approx -0.44\%$ \cite{li2022elastocaloric}, while the $\alpha,\beta$ bands distort slightly (an asymmetry of $\sim 2\%$) \cite{sunko2019direct}. 
\end{document}